\documentclass{article}

\PassOptionsToPackage{numbers}{natbib}
\usepackage[main, final]{neurips_2026}

\usepackage[utf8]{inputenc}
\usepackage[T1]{fontenc}
\usepackage{hyperref}
\usepackage{url}
\usepackage{booktabs}
\usepackage{amsfonts}
\usepackage{nicefrac}
\usepackage{microtype}
\usepackage{xcolor}
\usepackage{amsmath}
\usepackage{amsthm}
\newtheorem{prop}{Proposition}
\newtheorem{lemma}{Lemma}
\usepackage{graphicx}
\usepackage{amsthm}
\usepackage{algorithm}
\usepackage{subcaption}
\usepackage{algpseudocode}

\title{Incentivizing User Data Contributions for LLM Improvement under Withdrawal Rights}

\author{
\normalfont\small
\begin{tabular}{cc}
\begin{minipage}[t]{0.43\textwidth}
\centering
\textbf{Di Feng}\\
School of Finance\\
Dongbei University of Finance and Economics\\
\texttt{fengdi@dufe.edu.cn}
\end{minipage}
&
\begin{minipage}[t]{0.43\textwidth}
\centering
\textbf{Chenhao Zhang}\\
School of Artificial Intelligence\\
The Chinese University of Hong Kong, Shenzhen\\
\texttt{20003112@mail.ecust.edu.cn}
\end{minipage}
\\[5em]
\multicolumn{2}{c}{
\begin{minipage}[t]{0.58\textwidth}
\centering
\textbf{Zhanzhan Zhao}\thanks{Corresponding author}\\
School of Humanities and Social Science\\
School of Artificial Intelligence\\
The Chinese University of Hong Kong, Shenzhen\\
\texttt{zhanzhanzhao@cuhk.edu.cn}
\end{minipage}
}
\end{tabular}
}

\begin{document}

\maketitle

\begin{abstract}

The continued improvement of large language models (LLMs) increasingly depends on eliciting high-quality, user-generated data, yet such data are costly to provide and often withheld due to privacy and effort concerns. This creates a fundamental design challenge: how to incentivize data contribution when model improvements require coordinated, threshold-level inputs, while contributions remain privately costly and partially reversible. We develop and theoretically analyze incentive mechanisms for user data contribution that explicitly account for threshold effects and reversibility, focusing on how subsidies and withdrawal rights can be jointly designed to overcome coordination failure. As a natural benchmark, we first consider subsidy-based incentives, under which users respond to posted payments with privately optimal floor contributions. These decentralized responses may fall below the improvement threshold, resulting in subsidy expenditure without model improvements. We then analyze mechanisms with withdrawal rights, in which users report costs, the provider centrally assigns contribution burdens, and users may withdraw before training. We prove that combining cost reporting with personalized assignment can eliminate inefficient provision by ensuring that data are collected only when improvement is sustainable, converting infeasible instances into a null outcome rather than subsidy leakage. Finally, we compare two withdrawal protocols. The simultaneous protocol can achieve lower total cost, while the small-first sequential protocol better incentivizes participation, encouraging greater data provision and thereby increasing the probability of crossing the improvement threshold.

\end{abstract}

\section{Introduction}

Large language models (LLMs) are increasingly embedded in everyday workflows, from coding and writing to decision support and knowledge production. In these systems, users are not merely consumers of model outputs but also implicit contributors to their improvement: through prompts, feedback, corrections, and original content, they continuously generate data that can be used to refine future models \citep{christiano2017deep,stiennon2020learning,ouyang2022training}. This dual role gives rise to a form of “prosumption,” where usage and production are structurally intertwined \citep{toffler2022third,ritzer2010production}.

However, this coupling creates a fundamental training data dilemma. On one hand, the continued improvement of LLMs depends on access to high-quality and increasingly novel human-generated data beyond existing public corpora \citep{hoffmann2022training,grattafiori2024llama}. On the other hand, as users rely more heavily on LLMs for tasks such as writing, coding, and reasoning, they may reduce their own production of original content, leading to a contraction in the supply of novel data over time \citep{villalobos2024position,luo2025survey}. In this sense, the success of LLMs may erode the upstream data sources required for their continued improvement. This dilemma is further intensified by individual incentives: producing high-quality input requires effort, exposes privacy and proprietary risks, and often yields no direct compensation \citep{carlini2021extracting,acquisti2016economics,acemoglu2022too}. As a result, users may rationally withhold valuable contributions, particularly when the benefits of model improvement are shared broadly.

At a structural level, this setting resembles a public goods problem, where collective outcomes depend on individually costly contributions \citep{bagnoli1991voluntary,marx2000dynamic,sandler2015collective}. However, LLM ecosystems differ from classical formulations in two important respects. First, data contribution is partially reversible: users can delete interaction histories, revoke access, or restrict data usage through technical mechanisms such as machine unlearning, as well as through regulatory frameworks like data protection laws \citep{ginart2019making,guo2019certified}. This introduces dynamic strategic behavior absent in standard public goods models. Second, the relationship between data contribution and model performance is highly nonlinear. Empirical evidence suggests that sufficiently large or high-quality increments of data—particularly in targeted domains—can unlock substantial gains in performance \citep{ouyang2022training,grattafiori2024llama}, contrasting with the smooth production functions typically assumed in public goods settings.

These observations suggest that the sustainability of LLM improvement depends not only on advances in modeling, but on the design of incentive mechanisms that govern user participation. The central challenge is to design schemes—monetary or otherwise—that induce high-quality and novel contributions from heterogeneous users while accounting for effort costs, privacy concerns, and strategic behavior \citep{bergemann2022economics,zhang2024datamarkets,fan2025data}. Addressing this challenge is essential for aligning the incentives of users and platforms and for sustaining the long-term evolution of LLM systems.

To address this challenge, we study mechanism design for user data contribution under the structural features described above. We begin with a benchmark subsidy-only mechanism without withdrawal rights, in which contributions are irreversible. We then introduce two withdrawal protocols that capture the partial reversibility of data contribution: a simultaneous withdrawal regime, where users decide whether to retain their contributions at the same time, and a small-first sequential withdrawal regime, where users act in increasing order of assigned contributions \citep{varian1994sequential,bag2011sequential}. We theoretically prove that subsidies alone are insufficient to eliminate coordination failure arising from nonlinear contribution effects. Building on this, we incorporate cost disclosure and personalized assignment, and prove that the interaction between assignment and withdrawal rights confines positive contributions to cases where provision is sustainable. Section~\ref{sec:model} formalizes the model, Sections~\ref{sec:withdrawal}--\ref{sec:comparison} develop the mechanisms and theoretical results, and Section~\ref{sec:experiments} provides numerical validation.

\subsection{Related literature}

Our work relates to the literature on public goods and mechanism design, particularly models of voluntary contribution under threshold provision and coordination failure \citep{palfrey1984participation,bagnoli1991voluntary,croson2000step,marx2000dynamic}, as well as sequential and assignment-based mechanisms that improve efficiency \citep{varian1994sequential,bag2011sequential,vanhuyck1990tacit}. It is also connected to emerging work on data markets and incentive design for data contribution under privacy and effort costs \citep{acquisti2016economics,bergemann2022economics,zhang2024datamarkets,sim2023incentives}. In machine learning, prior work highlights the importance of human feedback and high-quality data in post-training and alignment \citep{christiano2017deep,stiennon2020learning,ouyang2022training}. Our work differs from these literatures in two key respects. First, we study a setting in which data contributions are partially reversible, capturing practical features such as deletion, withdrawal, and restricted data usage that are largely absent from standard public goods and data market models. Second, we incorporate threshold effects and centralized assignment into mechanism design, showing how the interaction between withdrawal rights, cost revelation, and assignment rules fundamentally alters equilibrium outcomes, eliminating inefficient provision and revealing trade-offs between cost efficiency and participation incentives.

\vspace{-0.25cm}
\section{Model}
\label{sec:model}

This section formalizes the data-contribution environment and the benchmark contribution mechanism without withdrawal. We first formalize user data contribution as a threshold public-good game with privately borne contribution costs, and then analyze the benchmark subsidy mechanism without withdrawal rights.

\subsection{Environment}

Consider an LLM platform with $n \geq 2$  users and one large language model (LLM) provider. Each user i can contribute data. The LLM achieves a quality breakthrough if and only if the aggregate data contribution meets a critical threshold.

The provider first offers a per-unit subsidy \(p\ge 0\) for effective data contributions. Given \(p\), each user \(i\) chooses a contribution \(e_i\in[0,1]\), where \(1\) denotes the maximum feasible contribution. User \(i\)'s private cost type \(c_i\) is independently drawn from a common distribution \(F\) on \([\underline c,\bar c]\), with \(0<\underline c<\bar c<\infty\) and continuous density \(f=F'>0\). The distribution is common knowledge, while each user observes only his own \(c_i\). We interpret \(c_i\) as privacy sensitivity or effort cost, and assume a quadratic contribution cost \(c_i e_i^2/2\), capturing increasing marginal privacy or curation costs~\citep{fan2025data}.

We assume that LLMs can only benefit from it when the data contributed by users reaches a certain threshold:
\begin{equation}\label{eq:tech}
    H(G) = 
    \begin{cases}
        V, & \text{if } G \geq X, \\
        0, & \text{if } G < X.
    \end{cases}
\end{equation}
where $G = \sum_{i=1}^n e_i$ is the aggregate effective data contribution, $V$ is the value of the quality improvement enjoyed by each user, and $X \in (m, m+1)$ for some integer $m \geq 1$ with $n \geq m+1$. Since $e_i \leq 1$ and $X > 1$, no single user can trigger the quality breakthrough alone. At least $m+1$ users must contribute meaningfully. 

In summary, the user’s utility function is obtained as follows:
\begin{equation}\label{eq:ui}
    U_i = V \cdot \mathbf{1}_{\left\{ \sum_{j}e_j \geq X \right\}}+pe_i - \frac{c_ie_i^2}{2},
\end{equation}
where the first term is the benefit from model improvement and the second is the user's contribution cost. The provider's utility is
\begin{equation}\label{eq:up}
    U_p = \pi V \cdot \mathbf{1}_{\{\sum_j e_j \ge X\}} - p \sum_{i=1}^n e_i ,
\end{equation}
where \(\pi>0\) is the provider's value share from a successful improvement. Subsidies are paid for effective contributions regardless of whether the threshold is reached.

\subsection{Contribution Game without Withdrawal}
Consider the benchmark mechanism \(C\), where submitted data cannot be withdrawn. Under this subsidy-only mechanism, users receive a per-unit payment \(p>0\), so \(C\) may admit two outcomes: a floor equilibrium, where users contribute only for the subsidy, and a productive cutoff equilibrium, where low-cost users contribute more to help reach the threshold.

When a user with cost \(c\) expects provision to fail, he solves $max_{e\in[0,1]} pe-\frac{c e^2}{2}$, which yields the floor contribution $e_0(c)=\min\{p/c,1\}$. Thus, a floor equilibrium is a profile in which every user contributes \(e_0(c)\).

\begin{prop}[Floor Equilibrium]\label{prop:floor}If the subsidy satisfies \(1+(n-1)\min\{p/\underline c,1\}<X\), then the strategy \(e_0(c_i)=\min\{p/c_i,1\}\) adopted by all users constitutes a Bayesian Nash equilibrium.
\end{prop}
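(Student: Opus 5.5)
We verify the Bayesian Nash equilibrium property by a unilateral-deviation argument. Fix a user $i$ with realized cost $c_i$, and suppose all other users $j\neq i$ follow $e_0(c_j)=\min\{p/c_j,1\}$.

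\textbf{Step 1: the others' aggregate is bounded deterministically.} Since $c_j\ge \underline c$ and $e_0$ is nonincreasing in $c$, each $e_0(c_j)\le \min\{p/\underline c,1\}$ for every realization of $c_j$. Hence
\[
\sum_{j\ne i} e_0(c_j)\le (n-1)\min\{p/\underline c,1\}
\]
holds with probability one, not merely in expectation.

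\textbf{Step 2: no deviation can reach the threshold.} For any choice $e_i\in[0,1]$, the aggregate satisfies
\[
G\le 1+(n-1)\min\{p/\underline c,1\}<X
\]
by the hypothesis. Therefore $\mathbf{1}_{\{G\ge X\}}=0$ for every $e_i$ and every realization of the others' types. User $i$'s utility in \eqref{eq:ui} then reduces pointwise to $pe_i-c_ie_i^2/2$. In particular, the interim expected utility over $c_{-i}$ equals this same expression.

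\textbf{Step 3: the floor contribution is the best response.} The reduced objective is strictly concave on $[0,1]$, with unconstrained maximizer $p/c_i$. Projecting onto $[0,1]$ gives $e_0(c_i)=\min\{p/c_i,1\}$ as the unique best response. Since this holds for every $i$ and every $c_i$, the profile $e_0$ is a Bayesian Nash equilibrium.

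The only potential subtlety is Step 1. The bound must hold uniformly over realizations of the others' types, so that the indicator term vanishes identically rather than only in expectation. Monotonicity of $e_0$ together with the support bound $c_j\ge\underline c$ delivers this uniformity directly. Once it is in place, the remaining steps are routine.
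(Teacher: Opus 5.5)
Your proposal is correct and follows essentially the same route as the paper's proof in Appendix~\ref{app:floor-proof}. Both bound the others' floor contributions uniformly by $(n-1)\min\{p/\underline c,1\}$, conclude that the provision indicator vanishes identically even under the maximal deviation $e_i=1$, and then maximize the strictly concave reduced payoff $pe_i-c_ie_i^2/2$ on $[0,1]$ to obtain $\min\{p/c_i,1\}$.
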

\begin{proof}
Since \(\min\{p/\underline c,1\}\) is the maximal floor contribution, condition \(1+(n-1)\min\{p/\underline c,1\}<X\) implies that even a unilateral deviation to \(e_i=1\) cannot make provision successful. Hence the provision term is irrelevant, and each user only solves \(\max_{e_i\in[0,1]} pe_i-c_i e_i^2/2\), whose solution is \(e_0(c_i)=\min\{p/c_i,1\}\). Thus the floor profile is a best response for every type. The details of proof are shown in Appendix ~\ref{app:floor-proof}.
\end{proof}

\noindent\textbf{Subsidy Leakage.} Under the floor equilibrium, the platform pays a total subsidy of \(p\sum_i e_0(c_i)\) but does not obtain an improvement in model performance. This phenomenon is called subsidy leakage: the subsidy successfully induces individual data contributions, but the collective data supply remains below the threshold required for model upgrading.

\begin{prop}[Cutoff Equilibrium]\label{prop:Cutoff}
Assume \(p<\underline c\), so that \(e_0(c)=p/c\). Under the
binomial-concentration approximation, if the upgrading value \(V\) exceeds the
cutoff existence threshold \(\underline V_C(p,n,F)\), mechanism \(\mathcal C\)
supports a productive cutoff equilibrium. In this equilibrium, there exists a
cutoff \(a_p\in(\underline c,\bar c)\) such that low-cost users contribute above
the floor and high-cost users remain at the floor:
\begin{equation}\label{eq:C-strategy}
e^{\mathcal C}(c)=
\begin{cases}
\tilde g(a_p), & c\le a_p,\\
p/c, & c>a_p,
\end{cases}
\qquad
\tilde g(a_p)=\frac{X-(n-1-m)\mu_0(a_p)}{m+1},
\end{equation}
where \(\mu_0(a_p)=\mathbb E[p/c\mid c>a_p]\). The cutoff \(a_p\) is determined
by the marginal user's indifference condition:
\begin{equation}\label{eq:C-cutoff}
V\Big[B(\tilde g(a_p),a_p)-B(p/a_p,a_p)\Big]
=
\frac{(a_p\tilde g(a_p)-p)^2}{2a_p}.
\end{equation}
Here \(B(e_i,a)\) is the probability that aggregate contributions reach \(X\)
when user \(i\) contributes \(e_i\) and the other users follow cutoff \(a\).
Under the cutoff regularity condition, this productive cutoff is unique on the
provision-relevant branch.
\end{prop}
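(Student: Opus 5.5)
The plan is to construct the productive cutoff equilibrium explicitly by a guess-and-verify argument on threshold strategies, and then to obtain existence and uniqueness of the cutoff from a fixed-point analysis of the indifference condition~\eqref{eq:C-cutoff}.

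\textbf{Step 1: the contribution level.} I first fix a candidate cutoff $a\in(\underline c,\bar c)$ and suppose all other users follow the two-level strategy~\eqref{eq:C-strategy}. Under the binomial-concentration approximation, the high-contribution group is treated as having size $m+1$ and the floor group as having size $n-1-m$. Each floor contributor is replaced by its conditional mean $\mu_0(a)=\mathbb E[p/c\mid c>a]$. The smallest common contribution level that makes aggregate contribution exactly reach $X$ then solves $(m+1)g+(n-1-m)\mu_0(a)=X$, which gives $\tilde g(a)$. Setting $g=\tilde g(a)$ is optimal for the group: any higher level only raises cost, and any lower level makes success fail with high probability. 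I will also check that $p/a<\tilde g(a)\le 1$ on the relevant range of $a$, so that the high contribution is feasible and lies strictly above the floor. The assumption $p<\underline c$ guarantees $e_0=p/c<1$ for every type.

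\textbf{Step 2: best responses.} Given that the others follow cutoff $a$, a type-$c$ user compares two payoffs. Contributing $\tilde g(a)$ yields $V B(\tilde g,a)+p\tilde g-c\tilde g^2/2$. Contributing the floor yields $V B(p/c,a)+p^2/(2c)$. Any other level is dominated: if the user is not aiming at the threshold, the floor $p/c$ maximizes the private part; if the user is aiming at it, any level above $\tilde g$ adds cost without raising $B$ under the approximation.

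The payoff difference is
\[
\Delta(c)=V\big[B(\tilde g,a)-B(p/c,a)\big]-\frac{(c\tilde g-p)^2}{2c},
\]
where the cost term comes from completing the square. Both the benefit gap and the cost term move monotonically in $c$ in opposite directions, so $\Delta$ is decreasing in $c$. This single-crossing property makes the cutoff rule a best response exactly when $\Delta(a)=0$, which is equation~\eqref{eq:C-cutoff}.

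\textbf{Step 3: existence of the cutoff.} I define
\[
\Phi(a)=V\big[B(\tilde g(a),a)-B(p/a,a)\big]-\frac{(a\tilde g(a)-p)^2}{2a},
\]
which is continuous in $a$ because $f$ is continuous. At $a=\bar c$, the provision-relevant probability gap is bounded, while the cost term is fixed, so $\Phi$ can be made negative there. Near $\underline c$, the benefit term is of order $V$ times a positive gap. Choosing $\underline V_C(p,n,F)$ as the smallest $V$ for which $\sup_a \Phi(a)>0$ while $\Phi$ keeps the opposite sign at an endpoint, the intermediate value theorem yields a root $a_p\in(\underline c,\bar c)$.

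\textbf{Step 4: uniqueness.} The cutoff regularity condition, which I read as $\Phi'(a)<0$ on the branch where $B(\tilde g(a),a)$ is bounded away from zero, gives at most one crossing on that branch. This yields uniqueness of the productive cutoff.

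The main obstacle is sign control of $\Phi$ at the endpoints together with the monotonicity of $\Delta$ in $c$. This requires handling how $B$ depends on $a$ through the binomial number of low-cost users. My first approach is to use the concentration approximation to replace $B$ with a smooth, step-like function of the expected group size, so that the comparative statics reduce to explicit derivatives of $\tilde g$ and $\mu_0$.
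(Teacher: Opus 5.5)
Your architecture matches the paper's: calibrate $\tilde g$, reduce best responses to floor versus $\tilde g$, use single crossing, apply the IVT, then argue uniqueness. However, the single-crossing step in Step~2 fails as written. $B(\cdot,a)$ is nondecreasing in own contribution, and the floor $p/c$ falls as $c$ rises. So the gap $B(\tilde g,a)-B(p/c,a)$ is \emph{nondecreasing} in $c$. It moves in the same direction as the cost $(c\tilde g-p)^2/(2c)$, not the opposite one. Higher-cost types would then have a larger pivotal gain, which works against the cutoff structure, and ``your $\Delta(c)$ is decreasing'' does not follow.

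The missing idea, on which the paper's proof rests, is type-independence of the pivotal term. Under binomial concentration the floor aggregate is replaced by its mean. Then $B(\cdot,a)$ is a step function with a single jump at $\tilde g(a)$: it equals $\Pr(h\ge m+1)$ below the jump and $\Pr(h\ge m)$ at or above it, where $h$ is the number of other participants. Every floor level $p/c<\tilde g(a)$ lies on the same flat piece, so $B(p/c,a)=B(p/a,a)$ for all types. Hence the pivotal term $V\Delta B(a)$, with $\Delta B(a)=\binom{n-1}{m}F(a)^m(1-F(a))^{n-1-m}$, does not depend on $c$. The net gain is then a constant minus $\Gamma^*(c,p,\tilde g)$, which is strictly increasing since $\partial_c\Gamma^*=(c^2\tilde g^2-p^2)/(2c^2)>0$. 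For this you need $\tilde g(a_p)>p/\underline c$ on the whole support, not merely $\tilde g(a)>p/a$. Otherwise the lowest types, whose own floor already clears the jump, would strictly prefer that floor to $\tilde g$.

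Step~3 gets the endpoint signs wrong. Because $m\ge 1$, $\Delta B(a)\to 0$ as $a\to\underline c$: with no other participants you can never be pivotal. So $\Phi(\underline c)=-\Gamma^*<0$, not ``of order $V$ times a positive gap.'' At $\bar c$, a merely bounded gap cannot give $\Phi<0$ for all large $V$. What works is that $\Delta B(\bar c)=0$ exactly (as long as $n-1-m\ge 1$), so $\Phi(\bar c)=-\Gamma^*<0$ uniformly in $V$. Positivity must therefore come from the interior. $\Delta B$ peaks at $a^*=F^{-1}(m/(n-1))$, and $\underline V_C$ should be defined so that $\Phi(a^*)>0$, for instance via the ratio $R(a)=\Gamma^*(a,p,\tilde g(a))/\Delta B(a)$ at $a^*$. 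The IVT then gives a root in $(a^*,\bar c)$. Your definition of $\underline V_C$ instead builds the needed sign change into the definition.

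This picture also explains ``provision-relevant branch'': $R$ is U-shaped, so a second root typically exists below $a^*$. For uniqueness, reading the regularity condition as $\Phi'<0$ assumes what is to be shown. The paper's condition is single-peakedness of $\Delta B$. On $a>a^*$, the numerator $\Gamma^*(a,p,\tilde g(a))$ rises (since $\tilde g(a)$ rises as $\mu_0(a)$ falls) while $\Delta B$ falls. So $R$ is strictly increasing there, and $V=R(a)$ has exactly one solution on that branch.
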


\begin{proof}
Define 
\begin{equation}\label{eq:Define}
\Phi(a)=V\Delta B(a)-\Gamma^*(a,p,\tilde g(a))
\end{equation}
the net expected gain
from participating relative to the floor for a user with cost \(a\). When
\(V>\underline V_C(p,n,F)\), there exists a cutoff \(a_p\) such that
\(\Phi(a_p)=0\), which is exactly the marginal indifference condition in
\eqref{eq:C-cutoff}. Given this cutoff, the gain from participation for a type
\(c\) is
\begin{equation}\label{ff}
\varphi(c) = V\Delta B(a_p) - \Gamma^*\big(c, p, \tilde{g}(a_p)\big) 
\end{equation}
The pivotal term depends on the common cutoff strategy but not on the user's own type, while \(\Gamma^*(c,p,\tilde g(a_p))\) is strictly increasing in \(c\) when
\(\tilde g(a_p)>p/c\). Hence \(\varphi(c)\) is strictly decreasing in \(c\).
Since \(\varphi(a_p)=0\), users with \(c<a_p\) strictly prefer the participation
contribution, and users with \(c>a_p\) strictly prefer the floor contribution.
Thus the strategy in ~\ref{eq:C-strategy} is a symmetric Bayesian Nash
equilibrium. Under the cutoff regularity condition, the ratio
\begin{equation}\label{eq:Ra}
R(a) = \frac{\Gamma^*\big(a,p,\tilde{g}(a)\big)}{\Delta B(a)}
\end{equation}
is strictly increasing on the
provision-relevant cutoff region. Therefore, the indifference equation
\(R(a)=V\) has at most one solution on that region, so the productive cutoff is
unique on the provision-relevant branch. The formal definition of
\(\underline V_C(p,n,F)\), the exact expression for \(B(e_i,a)\), the pivotal
probability, the binomial-concentration approximation, and the endpoint
best-response verification are provided in Appendix~\ref{app:cutoff-proof}.
\end{proof}

The cutoff equilibrium does not eliminate the floor equilibrium of
mechanism \(\mathcal C\). Hence the same subsidy policy can support both a
non-provision floor outcome with subsidy leakage and a productive cutoff outcome.
Which equilibrium is reached depends on users' initial beliefs about others'
participation, so mechanism \(\mathcal C\) remains vulnerable to equilibrium
selection and coordination failure~\citep{schuch2025coordinating}.

\section{Withdrawal Mechanism with Cost Revelation and Rational Assignment}
\label{sec:withdrawal}

We now introduce a withdrawal mechanism with voluntary cost disclosure and personalized assignment. Users may reveal their privacy costs before training, the provider assigns contribution targets based on the revealed information, and users retain ex post withdrawal rights before the data are used.

\subsection{Motivation and Timing}

We consider that the provider can use disclosed costs to allocate excess contribution burdens to low-cost users while keeping high-cost users at their floor contributions. Ex post withdrawal rights reduce participation risk and encourage disclosure. We first assume disclosed costs are verifiable or auditable, so the strategic choice is whether to reveal rather than what cost to report; robustness to imperfect cost information is discussed in Appendix~\ref{app:robustness}. The mechanism proceeds in four stages.

\textbf{Stage 0 (Subsidy announcement and cost disclosure).} The provider pre-commits to a uniform per-unit subsidy $p \ge 0$. Each user $i$ independently decides whether to disclose his private cost parameter $c_i$ to the provider; let $R$ denote the set of revealers. The provider simultaneously makes the credible commitment that any user who does not disclose will not be required to supply any data, i.e., $g_i=0$, and will receive no subsidy.

\textbf{Stage 1 (Personalized assignment and data deletion).}
The provider observes the cost vector $\{c_i\}_{i\in R}$ of all revealers, and for each revealer $i$ assigns a notional data contribution $g_i \in [0,1]$, which is communicated to that user. Non-revealers are mandatorily assigned $g_i = 0$. Immediately after the assignment, the provider permanently deletes all raw cost information, retaining only the assignment profile $\{g_i\}$.

\textbf{Stage 2 (Data withdrawal).}
Each revealer $i$ observes his own assignment $g_i$ and the public aggregate notional contribution $G_0 = \sum_j g_j$, and then chooses a withdrawal amount $r_i \in [0, g_i]$. The final effective data contribution is $e_i = g_i-r_i$. We consider two implementable protocols:
\begin{itemize}
    \item \textbf{Simultaneous withdrawal ($\mathcal{S}$):} All revealers choose their withdrawals simultaneously, without observing each other's decisions, corresponding to centralized batch deletion.
    \item \textbf{Small-first withdrawal ($\mathcal{M}$):} Revealers act in increasing order of their notional assignments $g_i$ (with ties broken by decreasing cost), and each later mover observes the running aggregate of effective contributions after preceding withdrawals, corresponding to step-by-step deletion confirmation by the platform.
\end{itemize}

\textbf{Stage 3 (Model training and payoff settlement).}
The LLM is trained on the final effective contributions $\{e_i\}$. If $\sum_i e_i \ge X$, the model quality improves and \emph{all} users, including non-revealers, receive the benefit $V$; otherwise the benefit is $0$. User $i$'s expected payoff function remains the same as \eqref{eq:ui}, and the platform's expected profit function is given by \eqref{eq:up}.

\subsection{Equilibrium Characterization}
We now characterize the equilibrium of the withdrawal mechanism. First, there are two foundational properties holding under the design introduced, independent of the specific withdrawal protocol.

\begin{prop}[Full revelation under exclusion commitment]\label{prop:revelation}
Suppose disclosed costs are verifiable or auditable. If the provider commits to exclude non-revealers from both assignment and subsidy payment, then full revelation is a BNE under both withdrawal protocols. Revelation is weakly profitable for every user, and strictly profitable whenever \(p>0\) and positive assignment occurs with positive probability.
\end{prop}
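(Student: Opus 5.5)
The plan is a unilateral-deviation argument at Stage~0. Fix a user \(i\) with cost \(c_i\), and suppose all other users reveal. I will compare \(i\)'s continuation payoff from revealing against not revealing, taking as given the assignment rule and the equilibrium play in the withdrawal subgame under \(\mathcal S\) or \(\mathcal M\). The key tool is the withdrawal right itself. Because \(r_i\in[0,g_i]\) is chosen after \(g_i\) is observed, a revealer can always replicate any effective contribution \(e_i\in[0,g_i]\), including \(e_i=0\). In particular, he can withdraw fully, which reproduces exactly what he would contribute as a non-revealer.

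The first step is to fix what a non-revealer gets. By the exclusion commitment he has \(g_i=0\), hence \(e_i=0\), and he is paid no subsidy. His payoff is therefore \(V\cdot\mathbf 1\{\sum_{j\ne i}e_j\ge X\}\).

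The second step is to show that a revealer who deviates to \(r_i=g_i\) obtains at least this. I will argue this pathwise over realizations of the other users' costs, using two facts.
\begin{itemize}
\item Others' effective contributions depend on \(i\)'s assignment only through \(G_0\) (and, under \(\mathcal M\), through the position in the order). Rather than compare across information sets, I will compare against the best response. In \(i\)'s withdrawal subgame, \(e_i=0\) is feasible and gives \(V\cdot\mathbf 1\{\cdot\}\) with zero cost and zero subsidy.
\item The optimal \(e_i\) is therefore weakly better than \(e_i=0\) given the others' strategies.
\end{itemize}

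The remaining gap is that \(i\)'s revelation changes \(g_{-i}\) and hence the others' contributions. I will handle this by invoking the provider's rational assignment. For \(i\) to enter with cost \(c_i\), the assignment must weakly add to total provision: \(i\)'s burden is placed on him rather than on others, and high-cost users are kept at the floor. Monotonicity of success in the aggregate then gives that the success indicator with \(i\) revealed is at least the one with \(i\) excluded. I will also need that the floor-or-assigned contribution satisfies \(pe_i-c_ie_i^2/2\ge 0\) whenever \(e_i\le 2p/c_i\). With this in hand, revelation is weakly profitable, and full revelation is a BNE under both protocols.

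The third step is strictness. If \(p>0\) and \(g_i>0\) with positive probability, then the revealer can keep \(e_i=\min\{g_i,p/c_i\}>0\) instead of \(0\). This raises the private term by \(pe_i-c_ie_i^2/2>0\). Keeping contribution can only raise the aggregate, so the success indicator weakly increases. Taking expectations over the positive-probability event gives a strict gain.

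The main obstacle is the cross-effect in step two: showing that \(i\)'s revelation cannot reduce the others' effective contributions enough to lower the success probability. This is especially delicate under \(\mathcal M\), where the order changes. My first attempt will rely on the assignment rule making the success event monotone in the revealer set. If that property is not available from the rule, I will fall back on conditioning on the withdrawal-subgame equilibrium in which others' choices depend only on their own \(g_j\) and the realized aggregate.
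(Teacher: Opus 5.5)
Your skeleton matches the paper's proof. It has the same three parts: a unilateral deviation at Stage~0, a non-revealer collecting only $V\cdot\mathbf 1\{\sum_{j\neq i}e_j\ge X\}$, and strictness from retaining a positive floor amount. You are also right that the cross-effect in your second step is the crux. But the repair you propose cannot close it, because the obstruction is $i$'s own burden, not the success probability.

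Under the floor-plus-backstop rule, the provider places above-floor targets on the \emph{lowest-cost} revealers, and every positive assignment hits $X$ exactly. So if revealing makes $i$ a backstopper (any member of $K$ under $\mathcal S$, or the last mover under $\mathcal M$), withdrawing to $0$ or to his floor destroys provision. Your $e_i=0$ bound then evaluates to $V\cdot\mathbf 1\{\sum_{j\neq i}e_j\ge X\}=0$ at the others' revealed-world assignments. It does not reproduce the non-revealer's payoff.

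Monotonicity of \emph{total} success in the revealer set does not help either, because success with $i$ revealed is bought by $i$ retaining $e_i^*$. His best continuation payoff is $\max\{V+pe_i^*-c_i(e_i^*)^2/2,\;p^2/(2c_i)\}$. This is strictly below $V$ whenever $e_i^*>2p/c_i$ and $V>p^2/(2c_i)$. That is exactly where your auxiliary inequality $pe_i-c_ie_i^2/2\ge 0$ fails, and backstopper targets typically lie far above $2p/c_i$. Your fallback does not help: what changes when $i$ reveals is the assignment $g_{-i}$ itself, not how others respond to a given assignment.

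Concretely, take the paper's three-user example ($X=1.2005$, $p=0.05$, $V=3.5$, costs $(10,40,100)$). The last mover under $\mathcal M$ retains $\bar D_1\approx 0.842\gg 2p/c_{(1)}=0.01$ with $\Gamma^*=V$ binding, so his payoff is $p^2/(2c_{(1)})=1.25\times 10^{-4}$. Now add a fourth user with cost $10.1$.
If the cost-$10$ user reveals, then $K=\{10,10.1\}$. The cost-$10.1$ user retains about $0.357$, and the cost-$10$ user is still squeezed to payoff $1.25\times10^{-4}$.
If he hides, the pool $\{10.1,40\}$ covers $D=1.2$. The cost-$10.1$ user fills about $0.837$, and the cost-$40$ user retains about $0.363$ with $\Gamma^*\approx 2.61\le 3.5$. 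The hider then receives $V=3.5$.
This strict preference for hiding persists on an open set of nearby cost profiles. Under $\mathcal S$ the same loss arises whenever $e_i^*>2p/c_i$ and the others can complete provision without $i$.

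So ``weakly profitable for every user'' fails ex post. The BNE claim would need an interim comparison that can tip toward hiding for low-cost types. The paper's proof covers this step only with ``the provider can still assign $g_i=0$'' and ``$i$ can retain at least the floor,'' so it has the same hole. Its assignment rule does not give low-cost revealers $g_i=0$, and retaining only the floor is incompatible with provision for a backstopper.

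What this line of argument actually supports is weaker. Revelation does not lower the provision probability when the provider chooses from a weakly larger set of assignments, and users held at their floor gain $p^2/(2c_i)\ge 0$ on top of that. Full revelation by every type needs an extra ingredient. Examples would be an assignment constraint guaranteeing each revealer at least his expected non-revelation payoff, or transfers to backstoppers.
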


\begin{proof}
Fix user \(i\) and suppose all other users reveal. If \(i\) does not reveal, then \(g_i=e_i=0\), so he only receives the public-good benefit generated by others. If he reveals, the provider can still assign \(g_i=0\), so revelation cannot reduce the feasible provision probability. When positive assignment occurs, \(i\) can retain at least the floor contribution \(e^0(c_i)=\min\{p/c_i,1\}\), which gives nonnegative subsidy surplus and is strictly profitable for \(p>0\). Hence revelation is a weak best response, and strictly profitable in positive-assignment states.
\end{proof}

\begin{prop}[Rational Assignment]\label{prop:Assignment}
In any subgame-perfect equilibrium (SPE) of the withdrawal-with-revelation game, the provider assigns a positive contribution profile only if the assignment is both provision-sustaining and provider-profitable. Specifically:
\begin{itemize}
    \item If the provider assigns $G_0\equiv\sum_j g_j>0$, then $\sum_j e_j \geq X$ in the subsequent Stage~2 equilibrium, and the provider's payoff satisfies $\Pi = \pi V - p\sum_j e_j \geq 0$.
    \item If the cost realization $(c_1,\ldots,c_n)$ is such that no assignment can lead to provision with nonnegative provider payoff in the Stage~2 equilibrium, the provider assigns $g_i=0$ for all $i$.
\end{itemize}
\end{prop}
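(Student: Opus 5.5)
The argument is backward induction on the game tree of the withdrawal mechanism: Stage~2 withdrawal subgames first, then the provider's Stage~1 assignment, with Stage~3 payoffs taken as given. The basic observation is that the null assignment \(g\equiv 0\) is always feasible and yields provider payoff exactly \(0\). It forces \(e_i=0\) for all \(i\), so there is no provision, since \(X>m\ge 1\), and no subsidy is paid. So in any SPE the provider's equilibrium payoff at every cost realization must be at least \(0\). Both bullets follow from comparing any positive assignment with this outside option, once we pin down the provider's payoff from a positive assignment that fails to reach \(X\).

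\textbf{Step 1 (failed positive assignments are strictly unprofitable).} Fix \(g\) with \(G_0>0\) and any Stage~2 equilibrium (simultaneous \(\mathcal S\) or small-first \(\mathcal M\)) in which \(\sum_j e_j<X\). The provider's payoff is then \(-p\sum_j e_j\le 0\).

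If \(p>0\), each revealer with \(g_i>0\) faces no provision benefit at the margin unless pivotal. So every non-pivotal user retains at least \(\min\{g_i,e_0(c_i)\}>0\), because the floor computation of Proposition~\ref{prop:floor} applies to the truncated choice set \([0,g_i]\) with strictly concave \(pe-c_ie^2/2\). This makes the payoff strictly negative, and \(g\) is strictly dominated by the null assignment.

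If \(p=0\), the payoff is \(0\) and ties arise. Here I will adopt the convention (stated as a tie-breaking rule favoring the null outcome) that the provider does not assign when indifferent. Alternatively, \(G_0>0\) with zero retention can be identified with the null outcome. I expect this tie case to need an explicit remark in the write-up.

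\textbf{Step 2 (first bullet).} Suppose in an SPE the provider chooses \(G_0>0\). By Step 1, the subsequent Stage~2 equilibrium must reach \(\sum_j e_j\ge X\); otherwise deviating to \(g\equiv0\) is strictly profitable. Given success, the payoff is \(\pi V-p\sum_j e_j\). Optimality against the null deviation gives \(\pi V-p\sum_j e_j\ge 0\).

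The subtle point is that Stage~2 may have multiple equilibria under \(\mathcal S\) (a failure one and a success one). SPE fixes one continuation equilibrium per subgame, and the argument applies to whichever continuation the SPE specifies. So the claim holds "in the subsequent Stage~2 equilibrium" as stated, and no selection assumption is needed.

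\textbf{Step 3 (second bullet) and the main obstacle.} If no assignment yields, in its Stage~2 continuation, provision with nonnegative provider payoff, then every \(g\) with \(G_0>0\) yields a payoff that is either negative after success or \(\le 0\) after failure. The latter is strictly negative when \(p>0\) by Step 1, and is resolved by tie-breaking when \(p=0\). Hence \(g\equiv0\) is optimal.

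The main obstacle is Step 1's claim that failure implies strictly positive retention when \(p>0\). It requires showing that in a failing Stage~2 equilibrium, no revealer with \(g_i>0\) withdraws fully. I will argue this directly. In a failing profile, a user for whom provision fails regardless of his own choice has best response \(\min\{g_i,p/c_i\}\), which is strictly positive. A user whose retention could change the outcome would, in a failing equilibrium, also be choosing an unconstrained subsidy optimum. Hence \(\sum_j e_j>0\) whenever \(G_0>0\), and the provider pays a strictly positive subsidy for nothing.
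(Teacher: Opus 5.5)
Your proposal is correct and follows the paper's route. Every positive assignment is compared with the always-available null assignment, which yields exactly $0$, so no SPE can select a positive assignment that fails or loses money. You are also more careful than the paper's proof in three places:
\begin{enumerate}
\item The paper asserts $-p\sum_j e_j<0$ without noting that this needs $p>0$ and strictly positive retention. Your bound $e_i\ge\min\{g_i,p/c_i\}>0$ supplies the second condition. It is cleanest stated as: raising retention up to $\min\{g_i,p/c_i\}$ strictly raises the subsidy payoff and cannot lower the provision probability, which avoids the ex post pivotal/non-pivotal split in an incomplete-information Stage~2.
\item For the second bullet, the paper only establishes that the null assignment is weakly optimal. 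Your strict-dominance argument shows it is the unique optimum when $p>0$.
\item At $p=0$ a failing positive assignment ties with the null assignment, so your tie-breaking remark is genuinely needed for both bullets to hold in every SPE.
\end{enumerate}
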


\begin{proof}
Suppose the provider assigns $G_0>0$ but $\sum_j e_j < X$ in the Stage~2 equilibrium. Then $\Pi = -p\sum_j e_j < 0$. The provider could deviate to $g_i = 0$ for all $i$, yielding $\Pi = 0$---a profitable deviation, contradicting SPE. If no assignment can lead to provision, then every positive assignment yields $\Pi \leq 0$. The null assignment yields $\Pi = 0$, which is weakly optimal.
\end{proof}

Propositions~\ref{prop:revelation} and~\ref{prop:Assignment} imply that the withdrawal mechanism induces participation and rules out subsidy leakage after positive assignment. Moreover, a positive assignment serves as a coordination signal, selecting the provision outcome rather than a floor-like non-provision continuation. After observing $G_0>0$, users can infer that the provider, who has observed the realized cost vector, expects successful provision in the continuation equilibrium; otherwise, the provider would choose the null assignment.

\noindent\textbf{Assignment structure.} Conditional on revelation, the provider uses a floor-plus-backstop assignment. Order users by cost,
\(c_{(1)}\le \cdots\le c_{(n)}\), and let \(e^0(c)=\min\{p/c,1\}\). The provider selects a backstop pool \(K\) from the lowest-cost users. Users outside \(K\) are assigned their floor contributions, \(g_{(j)}=e^0(c_{(j)})\) for \(j\notin K\). Users in \(K\) receive protocol-specific targets \(g_{(j)}=e^*_{(j)}\), which cover the residual demand:
\begin{equation}\label{eq:dk}
\sum_{j\in K} e^*_{(j)}
=
D_K
\equiv X - \sum_{j\notin K} e^0(c_{(j)}),
\qquad
e^*_{(j)}\in [e^0(c_{(j)}),1].
\end{equation}
Thus any positive assignment exactly reaches the threshold:
\begin{equation}\label{eq:gk}
G_0=\sum_j g_j=\sum_{j\notin K}e^0(c_{(j)})+\sum_{j\in K}e^*_{(j)}=X.
\end{equation}
If no pool size \(k\) satisfies \(D_K\le k\) and the protocol-specific participation constraints, the provider chooses the null assignment \(g_i=0\) for all users. Otherwise, non-backstoppers retain their assigned floor amounts in Stage~2, while a backstopper with cost \(c\) accepts a required retention \(d\in[e^0(c),1]\) only if \(V\ge \Gamma^*(c,p,d)\).

\begin{proof}
By Proposition~\ref{prop:floor}, each user voluntarily retains at least \(e^0(c_i)\). Hence non-backstoppers can be kept at their floor levels, and the remaining demand is assigned to the lowest-cost users. The split among backstoppers depends on the withdrawal protocol. A backstopper accepts his target exactly when the quality gain covers the incremental privacy cost, \(V\ge\Gamma^*(c,p,d)\). Appendix~\ref{app:backstop-pool} gives the formal derivation.
\end{proof}

After a positive assignment, beliefs are updated as follows. If \(G_0=0\), the game enters the null regime. If \(G_0=X\), users infer that the realized cost vector lies in the relevant provision region. Users with \(g_i>e^0(c_i)\) identify themselves as backstoppers; since the assignment exactly reaches \(X\), any backstopper who withdraws below his target becomes pivotal and causes provision failure.

\subsection{Single-Backstopper Case}
\label{Single-Backstopper Case}
When $D\le 1$, a single backstopper suffices.
Non-backstoppers have a dominant strategy to retain their floor $e_j=p/c_j$, since this is exactly their privately optimal amount.
The backstopper, observing the aggregate $G_0$, infers the deterministic residual gap $D$ and faces a single-agent binary choice: retain $D$ if $V\ge\Gamma^*(c_{(1)},p,D)$.
This decision depends only on the gap size, not on whether withdrawal is simultaneous or sequential.
Hence $\mathcal{S}$ and $\mathcal{M}$ yield identical outcomes, with provision region:
\begin{equation}\label{eq:regin1}
    \Omega_P^{\mathcal{S}}(p)=\Omega_P^{\mathcal{M}}(p)=\bigl\{(c_1,\dots,c_n): D\le 1 \text{ and } V\ge\Gamma^*(c_{(1)},p,D)\bigr\}.
\end{equation}
The result shows protocol design is irrelevant when only one backstopper is needed. We now show that meaningful protocol differences emerge only when multiple backstoppers are required.

\subsection{Multi-Backstopper Case}

When $D_K>1$, the provider must recruit $k\ge 2$ backstoppers. The two protocols produce different equilibrium structures. We now proceed to a comparative discussion of the two mechanisms.

\noindent\textbf{Small-First Withdrawal.}
Under $\mathcal{M}$ with backstop pool $K$ of size $k$, the withdrawal subgame has a unique PBE.
The equilibrium is constructed by backward induction. For the last backstopper $(1)$, define the success indicator
\begin{equation}\label{eq:S1}
S_1(c_{(1)},E_{<1}) = \mathbf{1}\{X - E_{<1} \le 1\} \cdot \mathbf{1}\{V \ge \Gamma^*(c_{(1)}, p, X-E_{<1})\}.
\end{equation}
If $S_1=1$, he fills the residual gap $D_1=X-E_{<1}$; otherwise he withdraws to his floor $p/c_{(1)}$.
For $j=2,\dots,k$, backstopper $(j)$ observes $E_{<j}$ and chooses $e_{(j)} \in [p/c_{(j)}, 1]$ to minimize $c_{(j)}e_{(j)}^2/2$ subject to leaving a residual $X - E_{<j} - e_{(j)}$ that makes the subsequent chain succeed (i.e., $S_{j-1}=1$ under equilibrium continuation).
The provider assigns positively iff the chain succeeds from the start, i.e., $S_k=1$.
The resulting provision region is
\begin{equation}\label{eq:regin2}
\Omega_P^{\mathcal{M}}(p) = \bigl\{(c_1,\dots,c_n): S_k(c_{(k)},\dots,c_{(1)},E_F)=1\bigr\},
\end{equation}
where $E_F=\sum_{j\notin K} p/c_j$ is the deterministic floor aggregate. Provision requires $V \ge \Gamma^*\bigl(c_{(j)}, p, e_{(j)}^*\bigr)$ for every backstopper along the equilibrium path, with the last backstopper's condition $V \ge \Gamma^*(c_{(1)}, p, D_1)$ being the final binding link.

\noindent\textbf{Simultaneous Withdrawal.}
Under $\mathcal{S}$, all backstoppers choose retentions simultaneously without observing each other's actions. To enable coordination, the provider privately communicates personalized target contributions based on the equal-marginal-cost allocation:
\begin{equation}\label{eq:S-best-alloc}
    e_j^* = \frac{D_K}{\bar{C}_K \cdot c_j}, \qquad \text{where } \bar{C}_K \equiv \sum_{\ell \in K} \frac{1}{c_\ell},
\end{equation}
This allocation equalizes \(c_j e_j^*\) across backstoppers, gives lower targets to higher-cost users, and minimizes the binding participation constraint. The provision BNE exists iff all backstoppers are willing to retain their targets, i.e., \(V\ge \max_{j\in K}\Gamma^*(c_j,p,e_j^*)\). In this equilibrium, each backstopper retains \(e_j^*\); any unilateral withdrawal to the floor \(p/c_j\) would make the user pivotal and destroy provision.
\begin{equation}\label{eq:SV}
V \ge \max_{j \in K}\; \Gamma^*\!\left(c_j, p, e_j^*\right) \quad \Longleftrightarrow \quad V \ge \bar{V}^S(c,p) \equiv \dfrac{\left(D_K/\bar{C}_K - p\right)^2}{2c_{(1)}}.
\end{equation}
The provision region is: 
\begin{equation}\label{eq:S-region}
    \Omega_P^{\mathcal{S}}(p) = \{(c_1,\dots,c_n): V \ge \bar{V}^S(c,p)\}.
\end{equation}

Intuitively, when $k\ge 2$, the sequential mechanism $\mathcal{M}$ exploits the flexibility created by the withdrawal order, while the simultaneous mechanism $\mathcal{S}$ adopts the equal-marginal-cost allocation and therefore often requires a higher quality-improvement value $V$ to satisfy all backstoppers' participation constraints.

\begin{prop}[Protocol containment]\label{prop:Containment} For any subsidy $p$, $\Omega_P^{\mathcal{S}}(p) \subseteq \Omega_P^{\mathcal{M}}(p)$, with equality when $k=1$ and strict inclusion when $k\ge 2$ generically. 
\end{prop}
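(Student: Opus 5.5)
The plan is to prove containment by a direct implementation argument. Any cost profile in $\Omega_P^{\mathcal S}(p)$ admits a provision-sustaining assignment under $\mathcal M$, because the small-first chain can replicate (or improve on) the equal-marginal-cost targets of \eqref{eq:S-best-alloc}. Throughout I use $\Gamma^*(c,p,d)=(cd-p)^2/(2c)$ for $d\ge p/c$, consistent with \eqref{eq:C-cutoff} and \eqref{eq:SV}. This function is increasing in $d$ on $[p/c,1]$ and increasing in $c$ for fixed $d>p/c$.

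The case $k=1$ is immediate from Section~\ref{Single-Backstopper Case}. With one backstopper, \eqref{eq:regin1} gives $\Omega_P^{\mathcal S}(p)=\Omega_P^{\mathcal M}(p)$, since $e^*=D_K/(\bar C_K c_{(1)})=D$ and the single participation constraint is identical under both protocols.

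For $k\ge 2$, fix $c\in\Omega_P^{\mathcal S}(p)$ with pool $K$. Then each $j\in K$ satisfies $V\ge\Gamma^*(c_j,p,e_j^*)$ and $e_j^*\in[p/c_j,1]$. Under $\mathcal M$ I use the same pool and the same floor aggregate $E_F$, and argue by backward induction along the ordering in \eqref{eq:S1}.

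The claim to establish is that the chain succeeds, i.e.\ $S_k=1$. I would show by induction that when each earlier mover retains $e^*_{(j)}$, the residual faced by the last backstopper is exactly $D_1=e^*_{(1)}\le 1$. Moreover $V\ge\Gamma^*(c_{(1)},p,e^*_{(1)})$, so $S_1=1$. Stepping back, each backstopper $(j)$ has at least one feasible retention, namely $e^*_{(j)}$, that keeps the continuation successful. Since $(j)$ minimizes his own cost subject to continuation success, his equilibrium retention is at most $e^*_{(j)}$, and it keeps $S_{j-1}=1$.

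The delicate part is the inductive bookkeeping. A smaller retention by an early mover shifts residual burden onto later movers. So I must verify that later movers still accept, i.e.\ that the constraint "$S_{j-1}=1$" already encodes their acceptance. By construction of $S$, it does. Hence success at every stage follows from nonemptiness of the feasible set, which the $\mathcal S$-allocation witnesses. Given a successful chain, the provider's positive assignment is profitable exactly when it is under $\mathcal S$, since total retained data equals $X$ in both protocols and payments $pX$ coincide. Proposition~\ref{prop:Assignment} then places $c$ in $\Omega_P^{\mathcal M}(p)$.

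The main obstacle is strictness for $k\ge2$ generically. I would exhibit an open set of profiles in $\Omega_P^{\mathcal M}\setminus\Omega_P^{\mathcal S}$. The idea is that under $\mathcal S$ the binding constraint is $\bar V^S=(D_K/\bar C_K-p)^2/(2c_{(1)})$, which is determined by the common product $c_je_j^*=D_K/\bar C_K$. Under $\mathcal M$, the sequence instead lets an early, higher-cost mover free-ride down to his floor, leaving a residual $\le1$ to a low-cost last mover whose constraint holds. Concretely, I would take $V$ slightly below $\bar V^S(c,p)$ and perturb the costs so that some nonequalized split keeps every $\Gamma^*(c_{(j)},p,e_{(j)})\le V$. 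I expect this to require the $\mathcal M$ chain to exploit the cost asymmetry, choosing a split that lowers the constraint of the agent who binds under $\mathcal S$ at the expense of slack agents. This is available whenever the constraints are not all simultaneously tight at the equal-marginal split, which holds off a measure-zero set of cost vectors. By continuity of $\Gamma^*$ in $(c,d)$, the strict gap persists on a neighborhood, which gives the generic strict inclusion.
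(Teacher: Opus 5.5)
Your containment argument and the $k=1$ case follow the paper's proof. The paper also fixes $\mathbf c\in\Omega_P^{\mathcal S}(p)$, keeps the same pool $K$ and floor aggregate $E_F$, and checks by backward induction along the path where each backstopper retains $e^*_{(j)}$ that the last mover faces a residual of exactly $e^*_{(1)}$ and that every stage leaves $S_{j-1}=1$. Your bound ``equilibrium retention is at most $e^*_{(j)}$'' holds at the states on that witness path, and that is all the induction uses. Along the realized $\mathcal M$ path, a later mover may have to retain more than $e^*_{(j)}$ once an earlier one retains less, but $\Omega_P^{\mathcal M}(p)$ only requires $S_k(E_F)=1$.

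The real difference is in strictness. The paper argues through the fill capacity $\bar d(c;V,p)=\min\{1,(p+\sqrt{2cV})/c\}$, which is decreasing in $c$. It then certifies strictness with an explicit instance: $n=3$, $X=1.2005$, $p=0.05$, $V=3.5$, costs $(10,40,100)$. There $\mathcal M$ retains $(0.84,0.36)$ with both constraints met, while the $\mathcal S$ target $0.96$ gives $\Gamma^*\approx 4.56>3.5$.

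You instead perturb at the $\mathcal S$ boundary. Since $c_je_j^*=D_K/\bar C_K$ on $K$, you have $\Gamma^*(c_j,p,e_j^*)=(D_K/\bar C_K-p)^2/(2c_j)$. This binds only at $c_{(1)}$ and is strictly slack, with $e_j^*<1$, for higher-cost backstoppers. So for $V$ just below $\bar V^{\mathcal S}$, a small transfer of burden from $(1)$ to $(2)$ gives a split that meets every constraint. To close the argument, state explicitly that your containment induction uses only the feasibility of the witness split, not its equal-marginal form, so it applies unchanged to the perturbed split.

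Your route explains the mechanism that the paper's general argument leaves vague: $\mathcal S$ overloads the lowest-cost backstopper, and $\mathcal M$ relieves him by having earlier, higher-cost movers retain more (in the paper's example, $0.36>0.24$). It also shows the gap appears in a band of $V$ below $\bar V^{\mathcal S}$ for every heterogeneous pool. The paper's example, in turn, gives a concrete, checkable witness.

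Two cautions. First, your sentence about the early mover ``free-riding down to his floor'' points the wrong way: that would enlarge the last mover's residual, not relieve him. Second, as in the paper, what you obtain is an open set of strict-inclusion profiles, not strictness at every $(V,p)$. If $V$ is large enough that $\bar d\equiv 1$, both regions reduce to $D_K\le k$ and coincide.
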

The complete construction and an illustrative cost realization showing strict inclusion are provided in Appendix ~\ref{app:containment}.

\section{Mechanism comparison and welfare implications}
\label{sec:comparison}

We compare the equilibrium outcomes and welfare implications of \(C\), \(\mathcal S\), and \(\mathcal M\). Mechanism \(C\) may select either a productive cutoff equilibrium or a subsidy-driven floor equilibrium, leading to positive but insufficient contributions. By contrast, \(\mathcal S\) and \(\mathcal M\) assign positive contributions only when provision is sustainable; otherwise they implement the null outcome. Thus, withdrawal mechanisms replace subsidy leakage under \(C\) with either successful provision or no collection.

\noindent\textbf{Social welfare.}
For welfare comparison, we use total surplus rather than provider profit. Since subsidy payments cancel as transfers between the provider and users, expected social welfare under mechanism \(J\in\{\mathcal C,\mathcal S,\mathcal M\}\) is
\begin{equation}\label{eq:EW}
SW^J(V,p)=\mathbb{E}_{\mathbf c}\Bigl[
  \underbrace{nV\cdot\mathbf 1_{\{G^J(\mathbf c;V,p)\ge X\}}}_{\text{provision benefit}}
  \;-\;
  \underbrace{\sum_{i=1}^n\frac{c_i(e_i^J(\mathbf c;V,p))^2}{2}}_{\text{total privacy cost}}
\Bigr].
\end{equation}
Thus, welfare depends on two objects: the provision probability
\(\Pr^J(V,p)\equiv\Pr(G^J(\mathbf c;V,p)\ge X)\) and the total privacy cost induced by equilibrium contributions.

Comparison at any given $(V,p)$, Three facts follow directly from the equilibrium characterization (Sections~\ref{sec:withdrawal} and~\ref{sec:comparison}):
\begin{itemize}
  \item \textbf{No floor failure.}
        Under $\mathcal{S}$ and $\mathcal{M}$ the null regime is no contributions are collected and no privacy cost is incurred. Under the floor equilibrium, \(C\) may induce positive subsidized contributions without successful provision, resulting in subsidy leakage. Thus $\mathcal{S},\mathcal{M}$ strictly dominate \(C\) in welfare whenever a floor equilibrium would be played in \(C\).
  \item \textbf{Provision region nesting.}
        For every $(p,V)$, $\Omega_P^S(p)\subseteq\Omega_P^M(p)$. Consequently $\Pr^S(p)\le\Pr^M(p)$, with equality in the single-backstopper case (Propositions~\ref{prop:Containment}).
  \item \textbf{Efficiency vs.\ robustness.}
Where both $\mathcal{S}$ and $\mathcal{M}$ succeed, $\mathcal{S}$ is more cost-efficient because its equal-marginal-cost allocation minimizes $\sum_i c_i e_i^2/2$. In contrast, $\mathcal{M}$ is more robust: its backward-induction chain places the most capable backstopper last, allowing provision under more heterogeneous cost realizations.
\end{itemize}

\section{Numerical experiments}
\label{sec:experiments}
This section presents numerical simulations that illustrate the equilibrium properties of the three mechanisms. The main theoretical prediction is that $C$ is sensitive to equilibrium selection, whereas $\mathcal{S}$ and $\mathcal{M}$ remove the floor-failure outcome via rational assignment.

\paragraph{Experimental setup.}
 We set \(n=50\), \(X=10.5\), and normalize the maximum individual contribution to one. Each privacy cost is independently drawn from the uniform distribution \(U[1,5]\). We vary the value of model improvement \(V\in[0,5]\) and the per-unit subsidy \(p\in[0,0.65]\). For each pair \((V,p)\), we conduct 5,000 independent simulations of cost vectors and compute the equilibrium outcomes associated with each mechanism. The provision success probability is reported as the fraction of simulations in which aggregate contribution reaches $X$, and expected social welfare is computed via \eqref{eq:EW}.

\paragraph{Experiment 1 (Equilibrium selection in mechanism \(C\).)}
Since \(C\) may select either the floor or cutoff equilibrium, we introduce a belief parameter \(b_0\) to capture users' ability to coordinate on the cutoff outcome. When \(b_0=0\), only the floor equilibrium is selected; larger \(b_0\) expands the region in which productive cutoff equilibria are reached. Details are provided in Appendix~\ref{app:num-details}.

\paragraph{Belief sensitivity analysis in mechanism \(C\).}
The belief parameter \(b_0\) has a direct effect on equilibrium selection in mechanism \(C\). Figure~\ref{fig:c-belief} reports the provision success probability under three belief levels: \(b_0=0\), \(b_0=0.15\), and \(b_0=0.3\). As \(b_0\) increases, more coexistence regions select the productive cutoff equilibrium rather than the floor equilibrium. Thus, the success probability of \(C\) is highly sensitive to users' equilibrium-selection beliefs.

\begin{figure}[t]
    \centering
    \includegraphics[width=\linewidth]{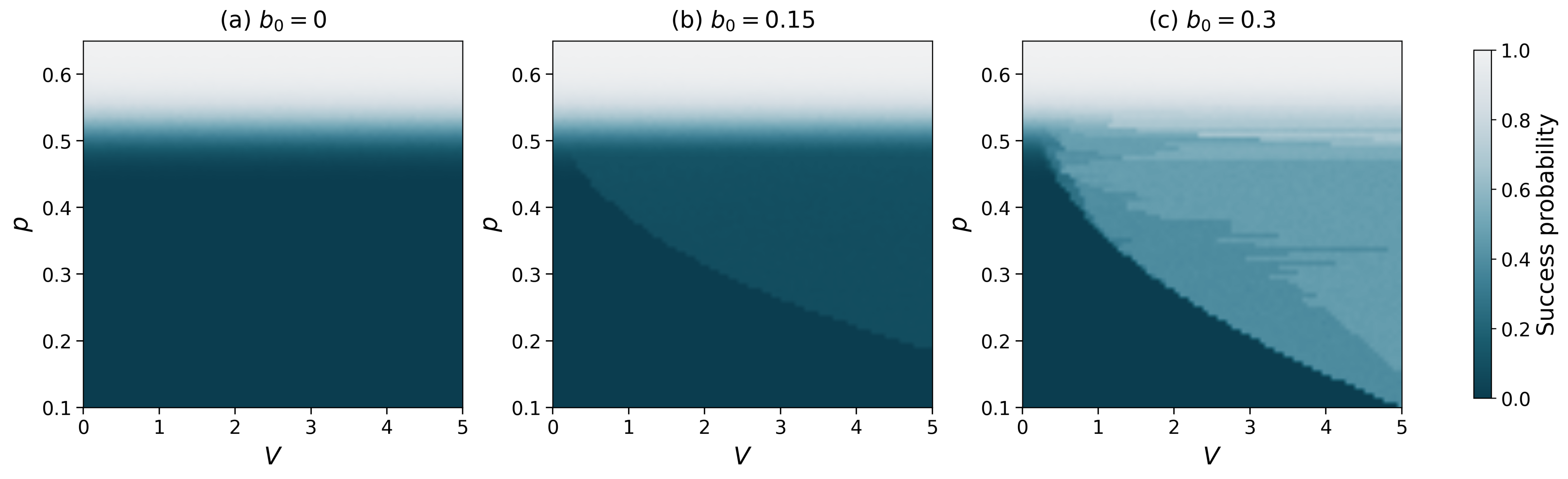}
    \caption{Provision success probability of mechanism \(C\) under different equilibrium-selection beliefs. When \(b_0=0\), only the floor equilibrium is selected, so success occurs only when floor contributions alone reach \(X\). Higher \(b_0\) allows productive cutoff equilibria to be selected in a larger region of the \((V,p)\) plane.}
    \label{fig:c-belief}
\end{figure}

\paragraph{Provision success across mechanisms.}We next compare the provision success probabilities of the three mechanisms \(C\), $\mathcal{S}$, and $\mathcal{M}$. For mechanism \(C\), we use the intermediate belief level \(b_0=0.15\), which represents a moderately optimistic coordination environment. In contrast, mechanisms $\mathcal{S}$ and $\mathcal{M}$ follow the assignment-and-withdrawal rules introduced above and do not require additional belief parameters.

Figure~\ref{fig:success-comparison} shows that the success probability for the same \((V,p)\) pair generally ranks as $\mathcal{M}$ > $\mathcal{S}$ > \(C\). For withdrawal mechanisms $\mathcal{S}$ and $\mathcal{M}$, the upgrading value \(V\) acts as a threshold: when \(V\) is below the mechanism-specific incremental participation cost, provision fails unless the subsidy \(p\) is large enough for floor contributions alone to reach the threshold. The figure also shows that the feasible region under \(M\) is substantially larger than that under $\mathcal{S}$).

\begin{figure}[t]
    \centering
    \includegraphics[width=\linewidth]{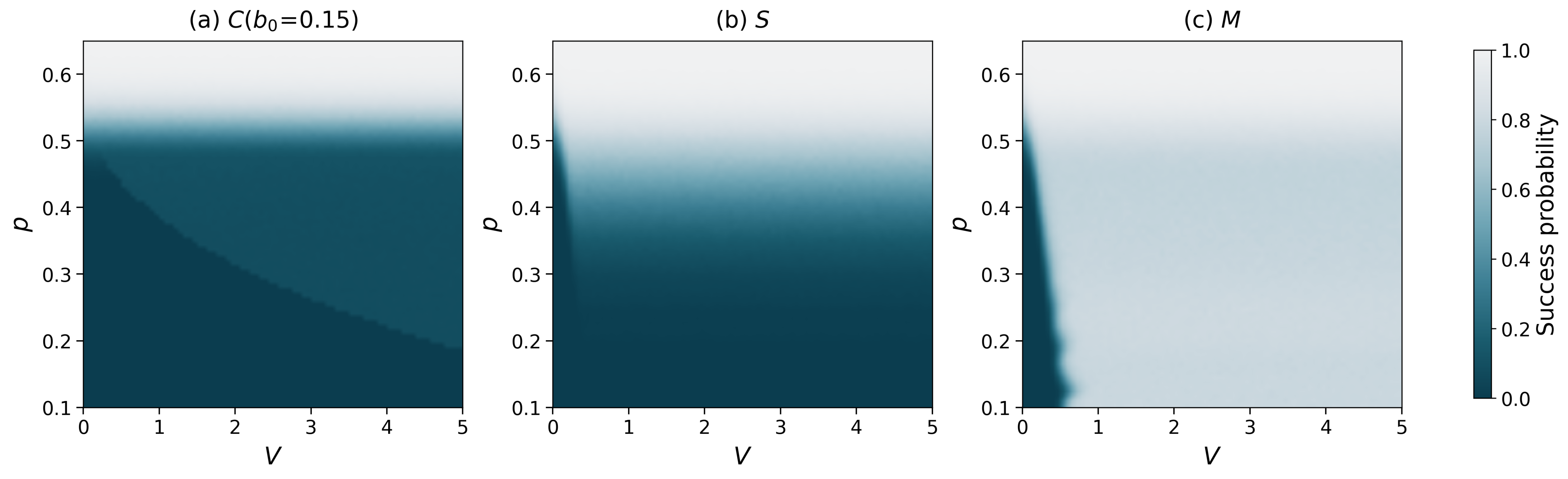}
    \caption{Provision success probability under the subsidy mechanism \(C\), simultaneous withdrawal $\mathcal{S}$, and small-first withdrawal $\mathcal{M}$. We report \(C\) at \(b_0=0.15\). The withdrawal mechanisms do not rely on an external optimistic belief to eliminate floor failure. The small-first protocol $\mathcal{M}$ has the largest success region.}
    \label{fig:success-comparison}
\end{figure}

\paragraph{Experiment 2 (Social welfare comparison).}
The welfare comparison reinforces the provision-probability results. As shown in Figure~\ref{fig:welfare-comparison}, mechanism \(\mathcal C\) can generate negative total welfare when the floor equilibrium is selected: subsidies induce positive floor contributions and create provider-side subsidy leakage, while users still incur privacy costs without realizing the public improvement benefit. In contrast, under \(\mathcal S\) and \(\mathcal M\), failed provision leads to null assignment and no privacy cost. Negative welfare under the withdrawal mechanisms therefore arises only when provision succeeds but \(V\) is too small relative to the induced privacy costs. Since \(\mathcal M\) succeeds over a larger parameter region, it achieves higher expected welfare than \(\mathcal S\) in the reported simulations. This dominance is not a pointwise Pareto improvement, because backstoppers may bear higher privacy costs; when both \(\mathcal S\) and \(\mathcal M\) succeed, \(\mathcal S\) yields lower average privacy cost. Details are provided in Appendix~\ref{app:cost & pareto}.

\begin{figure}[t]
    \centering
    \includegraphics[width=\linewidth]{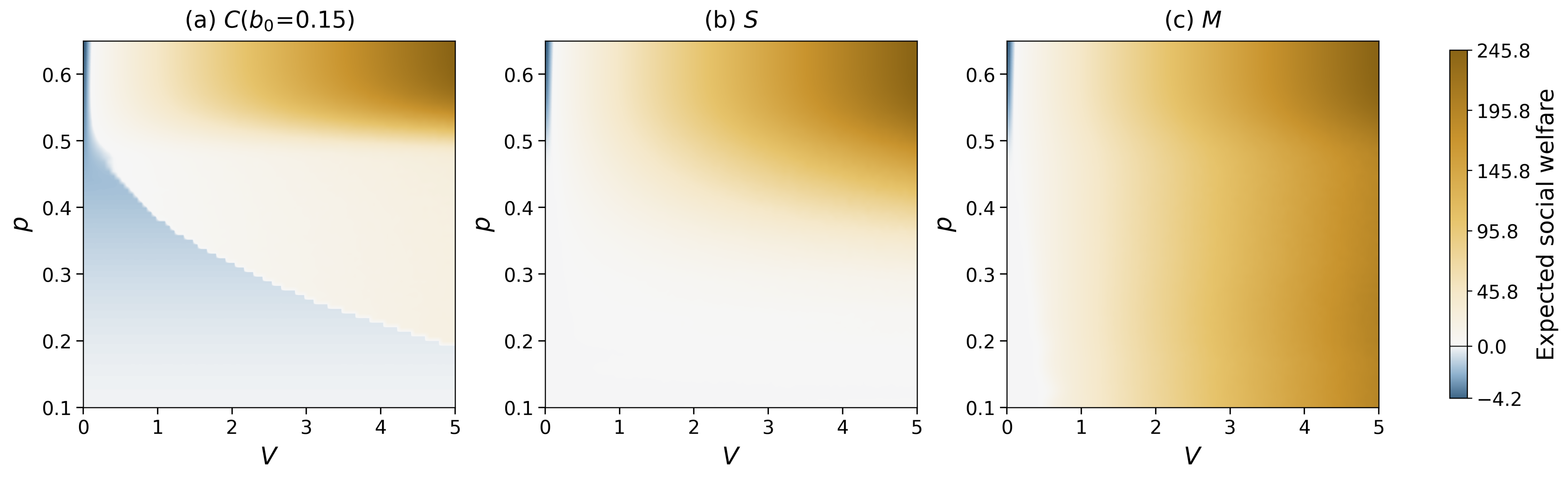}
    \caption{Expected social welfare under \(C\), $\mathcal{S}$, and $\mathcal{M}$. Social welfare is defined as the provision benefit $nV$ minus the total privacy cost. Regions with zero expected social welfare are set to white, while areas with negative welfare are marked in blue.}
    \label{fig:welfare-comparison}
\end{figure}

\section{Discussion}
Our model shows that withdrawal rights need not undermine user data contribution for LLM improvement. In the subsidy-only mechanism \((\mathcal C)\), users may coordinate on a floor equilibrium in which the provider pays for data but fails to obtain provision~\citep{baranski2025role}. By contrast, cost revelation and rational assignment make positive assignment a coordination signal, allowing the withdrawal mechanisms to avoid floor failure. The comparison between simultaneous withdrawal \((\mathcal S)\) and small-first withdrawal \((\mathcal M)\) reveals a trade-off: \(\mathcal S\) is more cost-efficient conditional on provision, while \(\mathcal M\) is more robust in heterogeneous multi-backstopper cases and can yield higher expected welfare. These conclusions rely on a stylized threshold public-good model with one-dimensional independent costs, credible provider commitment, and effective withdrawal or unlearning. Thus, our results should be read as mechanism-level evidence that, when paired with verifiable assignment rules, user control can support rather than obstruct LLM data contribution.

{\small
  \bibliographystyle{unsrtnat}  
  \bibliography{references}

@article{hoffmann2022training,
  title={Training Compute-Optimal Large Language Models},
  author={Hoffmann, Jordan and Borgeaud, Sebastian and Mensch, Arthur and Buchatskaya, Elena and Cai, Trevor and Rutherford, Eliza and Casas, Diego de Las and Hendricks, Lisa Anne and Ring, Roman and others},
  journal={arXiv preprint arXiv:2203.15556},
  year={2022}
}

@article{ritzer2010production,
  title={Production, consumption, prosumption: The nature of capitalism in the age of the digital ‘prosumer’},
  author={Ritzer, George and Jurgenson, Nathan},
  journal={Journal of consumer culture},
  volume={10},
  number={1},
  pages={13--36},
  year={2010},
  publisher={Sage Publications Sage UK: London, England}
}

@book{toffler2022third,
  title={The third wave: The classic study of tomorrow},
  author={Toffler, Alvin},
  year={2022},
  publisher={Bantam}
}

@article{ouyang2022training,
  title={Training language models to follow instructions with human feedback},
  author={Ouyang, Long and Wu, Jeffrey and Jiang, Xu and Almeida, Diogo and Wainwright, Carroll and Mishkin, Pamela and Zhang, Chong and Agarwal, Sandhini and Slama, Katarina and Ray, Alex and others},
  journal={Advances in neural information processing systems},
  volume={35},
  pages={27730--27744},
  year={2022}
}

@article{grattafiori2024llama,
  title={The llama 3 herd of models},
  author={Grattafiori, Aaron and Dubey, Abhimanyu and Jauhri, Abhinav and Pandey, Abhinav and Kadian, Abhishek and Al-Dahle, Ahmad and Letman, Aiesha and Mathur, Akhil and Schelten, Alan and Vaughan, Alex and others},
  journal={arXiv preprint arXiv:2407.21783},
  year={2024}
}

@article{acemoglu2022too,
  title={Too much data: Prices and inefficiencies in data markets},
  author={Acemoglu, Daron and Makhdoumi, Ali and Malekian, Azarakhsh and Ozdaglar, Asu},
  journal={American Economic Journal: Microeconomics},
  volume={14},
  number={4},
  pages={218--256},
  year={2022},
  publisher={American Economic Association 2014 Broadway, Suite 305, Nashville, TN 37203-2425}
}

@article{bergemann2022economics,
  title={The economics of social data},
  author={Bergemann, Dirk and Bonatti, Alessandro and Gan, Tan},
  journal={The RAND Journal of Economics},
  volume={53},
  number={2},
  pages={263--296},
  year={2022},
  publisher={Wiley Online Library}
}

@article{sandler2015collective,
  title={Collective action: fifty years later},
  author={Sandler, Todd},
  journal={Public Choice},
  volume={164},
  pages={195--216},
  year={2015},
  publisher={Springer}
}

@article{varian1994sequential,
  title={Sequential contributions to public goods},
  author={Varian, Hal R},
  journal={Journal of Public Economics},
  volume={53},
  number={2},
  pages={165--186},
  year={1994},
  publisher={Elsevier}
}

@article{marx2000dynamic,
  title={Dynamic voluntary contribution to a public project},
  author={Marx, Leslie M and Matthews, Steven A},
  journal={The Review of Economic Studies},
  volume={67},
  number={2},
  pages={327--358},
  year={2000},
  publisher={Wiley-Blackwell}
}

@inproceedings{villalobos2024position,
  title     = {Position: Will we run out of data? Limits of LLM scaling based on human-generated data},
  author    = {Villalobos, Pablo and Ho, Anson and Sevilla, Jaime and Besiroglu, Tamay and Heim, Lennart and Hobbhahn, Marius},
  booktitle = {Forty-first International Conference on Machine Learning},
  year      = {2024}
}

@inproceedings{luo2025survey,
  title     = {A survey on efficient large language model training: From data-centric perspectives},
  author    = {Luo, Jie and Wu, Bing and Luo, Xiaolin and others},
  booktitle = {Proceedings of the 63rd Annual Meeting of the Association for Computational Linguistics (Volume 1: Long Papers)},
  pages     = {30904--30920},
  year      = {2025}
}

@article{sim2023incentives,
  title     = {Incentives in private collaborative machine learning},
  author    = {Sim, Rachael Hwee Ling and Zhang, Yehong and Hoang, N. and Low, Bryan Kian Hsiang},
  journal   = {Advances in Neural Information Processing Systems},
  volume    = {36},
  pages     = {7555--7593},
  year      = {2023}
}

@article{bagnoli1991voluntary,
  title     = {Voluntary contribution games: Efficient private provision of public goods},
  author    = {Bagnoli, Mark and McKee, Michael},
  journal   = {Economic Inquiry},
  volume    = {29},
  number    = {2},
  pages     = {351--366},
  year      = {1991}
}

@article{bag2011sequential,
  title     = {On sequential and simultaneous contributions under incomplete information},
  author    = {Bag, Parimal Kanti and Roy, Santanu},
  journal   = {International Journal of Game Theory},
  volume    = {40},
  number    = {1},
  pages     = {119--145},
  year      = {2011}
}

@article{christiano2017deep,
  title     = {Deep reinforcement learning from human preferences},
  author    = {Christiano, Paul F. and Leike, Jan and Brown, Tom and Martic, Miljan and Legg, Shane and Amodei, Dario},
  journal   = {Advances in Neural Information Processing Systems},
  volume    = {30},
  year      = {2017}
}

@article{stiennon2020learning,
  title     = {Learning to summarize with human feedback},
  author    = {Stiennon, Nisan and Ouyang, Long and Wu, Jeff and Ziegler, Daniel M. and Lowe, Ryan and Voss, Chelsea and Radford, Alec and Amodei, Dario and Christiano, Paul F.},
  journal   = {Advances in Neural Information Processing Systems},
  volume    = {33},
  pages     = {3008--3021},
  year      = {2020}
}

@article{acquisti2016economics,
  title     = {The economics of privacy},
  author    = {Acquisti, Alessandro and Taylor, Curtis and Wagman, Liad},
  journal   = {Journal of Economic Literature},
  volume    = {54},
  number    = {2},
  pages     = {442--492},
  year      = {2016}
}

@inproceedings{carlini2021extracting,
  title     = {Extracting training data from large language models},
  author    = {Carlini, Nicholas and Tramer, Florian and Wallace, Eric and Jagielski, Matthew and Herbert-Voss, Ariel and Lee, Katherine and Roberts, Adam and Brown, Tom B. and Song, Dawn and Erlingsson, {\'U}lfar and others},
  booktitle = {30th USENIX Security Symposium (USENIX Security 21)},
  pages     = {2633--2650},
  year      = {2021}
}

@article{vanhuyck1990tacit,
  title     = {Tacit coordination games, strategic uncertainty, and coordination failure},
  author    = {Van Huyck, John B. and Battalio, Raymond C. and Beil, Richard O.},
  journal   = {The American Economic Review},
  volume    = {80},
  number    = {1},
  pages     = {234--248},
  year      = {1990}
}

@article{zhang2024datamarkets,
  title     = {A survey on data markets},
  author    = {Zhang, Jun and Bi, Yuxin and Cheng, Meng and Yang, Qiang},
  journal   = {arXiv preprint arXiv:2411.07267},
  year      = {2024}
}

@article{ginart2019making,
  title     = {Making AI forget you: Data deletion in machine learning},
  author    = {Ginart, Antonio A. and Guan, Melody Y. and Valiant, Gregory and Zou, James},
  journal   = {Advances in Neural Information Processing Systems},
  volume    = {32},
  pages     = {3518--3531},
  year      = {2019}
}

@article{guo2019certified,
  title     = {Certified data removal from machine learning models},
  author    = {Guo, C. and Goldstein, T. and Hannun, A. and et al.},
  journal   = {arXiv preprint arXiv:1911.03030},
  year      = {2019}
}

@article{palfrey1984participation,
  title     = {Participation and the provision of discrete public goods: a strategic analysis},
  author    = {Palfrey, T. R. and Rosenthal, H.},
  journal   = {Journal of Public Economics},
  volume    = {24},
  number    = {2},
  pages     = {171--193},
  year      = {1984}
}

@article{croson2000step,
  title     = {Step returns in threshold public goods: A meta-and experimental analysis},
  author    = {Croson, R. T. A. and Marks, M. B.},
  journal   = {Experimental Economics},
  volume    = {2},
  number    = {3},
  pages     = {239--259},
  year      = {2000}
}

@article{fan2025data,
  title     = {Do Data Valuations Make Good Data Prices?},
  author    = {Fan, Dongyang and Rotello, Tyler J. and Karimireddy, Sai Praneeth},
  journal   = {arXiv preprint arXiv:2504.05563},
  year      = {2025}
}

@article{baranski2025role,
  title     = {The role of fairness ideals in coordination failure and success},
  author    = {Baranski, A. and Reuben, E. and Riedl, A.},
  journal   = {CESifo Working Paper No. 12195},
  year      = {2025}
}

@article{schuch2025coordinating,
  title     = {Coordinating on good and bad outcomes in threshold games--Evidence from an artefactual field experiment in Cambodia},
  author    = {Schuch, Esther and Nhim, Tum and Richter, Andries},
  journal   = {Ecological Economics},
  volume    = {232},
  pages     = {108547},
  year      = {2025}
}
}

\newpage
\section*{NeurIPS Paper Checklist}

\begin{enumerate}

\item {\bf Claims}
    \item[] Question: Do the main claims made in the abstract and introduction accurately reflect the paper's contributions and scope?
    \item[] Answer: \answerYes{}.
    \item[] Justification: The abstract and introduction accurately state the paper's model, mechanisms, and main theoretical and numerical findings.

\item {\bf Limitations}
    \item[] Question: Does the paper discuss the limitations of the work performed by the authors?
    \item[] Answer: \answerYes{}.
    \item[] Justification: Limitations are discussed in the main body of the paper, and in a dedicated section in the appendix.

\item {\bf Theory assumptions and proofs}
    \item[] Question: For each theoretical result, does the paper provide the full set of assumptions and a complete (and correct) proof?
    \item[] Answer: \answerYes{}.
    \item[] Justification: The assumptions are stated in Section~\ref{sec:model} and Section~\ref{sec:withdrawal}, and formal proofs are provided in the appendix.

\item {\bf Experimental result reproducibility}
    \item[] Question: Does the paper fully disclose all the information needed to reproduce the main experimental results of the paper to the extent that it affects the main claims and/or conclusions of the paper?
    \item[] Answer: \answerYes{}.
    \item[] Justification: Section~\ref{sec:experiments} and Appendix~\ref{app:num-details} report the parameters, distributions, simulation size, and outcome definitions needed to reproduce the numerical results.

\item {\bf Open access to data and code}
    \item[] Question: Does the paper provide open access to the data and code, with sufficient instructions to faithfully reproduce the main experimental results, as described in supplemental material?
    \item[] Answer: \answerYes{}.
    \item[] Justification: The experiments use synthetic data, and anonymized reproduction code is provided in the supplemental material.

\item {\bf Experimental setting/details}
    \item[] Question: Does the paper specify all the training and test details necessary to understand the results?
    \item[] Answer: \answerYes{}.
    \item[] Justification: The simulation settings of this paper are specified in Section~\ref{sec:experiments} and the Appendix~\ref{app:num-details} .

\item {\bf Experiment statistical significance}
    \item[] Question: Does the paper report error bars suitably and correctly defined or other appropriate information about the statistical significance of the experiments?
    \item[] Answer: \answerYes{}.
    \item[] Justification: The theoretical solution exhibits no variability. The exact experimental parameters together with standard error analyses are presented in the main text and Appendix~\ref{app:num-details}.

\item {\bf Experiments compute resources}
    \item[] Question: For each experiment, does the paper provide sufficient information on the computer resources needed to reproduce the experiments?
    \item[] Answer: \answerYes{}.
    \item[] Justification: The experiments can be run on a single CPU (anywhere from 40 minutes to 2 hours for a simulation, depending on the number of parameters checked).

\item {\bf Code of ethics}
    \item[] Question: Does the research conducted in the paper conform, in every respect, with the NeurIPS Code of Ethics?
    \item[] Answer: \answerYes{}.
    \item[] Justification: The authors reviewed the NeurIPS Code of Ethics, and the paper uses only theory and synthetic simulations.

\item {\bf Broader impacts}
    \item[] Question: Does the paper discuss both potential positive societal impacts and negative societal impacts of the work performed?
    \item[] Answer: \answerYes{}.
    \item[] Justification: The paper discusses privacy-preserving data contribution as well as concerns about burden allocation, commitment, and imperfect withdrawal.

\item {\bf Safeguards}
    \item[] Question: Does the paper describe safeguards that have been put in place for responsible release of data or models that have a high risk for misuse?
    \item[] Answer: \answerNA{}.
    \item[] Justification: The paper does not release models, scraped datasets, or other high-risk assets.

\item {\bf Licenses for existing assets}
    \item[] Question: Are the creators or original owners of assets used in the paper properly credited and are the license and terms of use explicitly mentioned and properly respected?
    \item[] Answer: \answerNA{}.
    \item[] Justification: The paper does not use external datasets, pretrained models, or third-party research assets.

\item {\bf New assets}
    \item[] Question: Are new assets introduced in the paper well documented and is the documentation provided alongside the assets?
    \item[] Answer: \answerYes{}.
    \item[] Justification: The supplemental reproduction code is documented and anonymized; no new dataset or model is introduced.

\item {\bf Crowdsourcing and research with human subjects}
    \item[] Question: For crowdsourcing experiments and research with human subjects, does the paper include the full text of instructions given to participants and screenshots, if applicable, as well as details about compensation?
    \item[] Answer: \answerNA{}.
    \item[] Justification: The paper does not involve crowdsourcing, surveys, or human-subject experiments.

\item {\bf Institutional review board (IRB) approvals or equivalent for research with human subjects}
    \item[] Question: Does the paper describe potential risks incurred by study participants and whether IRB approvals were obtained?
    \item[] Answer: \answerNA{}.
    \item[] Justification: The paper does not involve human subjects, so IRB approval is not applicable.

\item {\bf Declaration of LLM usage}
    \item[] Question: Does the paper describe the usage of LLMs if it is an important, original, or non-standard component of the core methods in this research?
    \item[] Answer: \answerNA{}.
    \item[] Justification: This paper does not use LLMs as an important, original, or non-standard component.
\end{enumerate}

\newpage
\appendix
\section{Proofs Omitted from Section~\ref{sec:model}}
\subsection{Strict BNE Proof for Proposition~\ref{prop:floor} (Floor Equilibrium)}
\label{app:floor-proof}

\begin{proof}
Fix an arbitrary user \(i\) with cost type \(c_i\in[\underline c,\bar c]\) and suppose all other users \(j\neq i\) follow the strategy
\[
e_j = e^{0}(c_j) = \min\!\Big\{\frac{p}{c_j},\,1\Big\}.
\]
User \(i\) chooses an action \(e_i\in[0,1]\) without knowing the realizations \(\{c_j\}_{j\neq i}\). His ex‑post payoff, given the full cost vector, is
\[
u_i(e_i,\mathbf c_{-i}) = V\cdot\mathbf 1_{\{e_i+\sum_{j\neq i}e^{0}(c_j)\ge X\}} + p e_i - \frac{c_i}{2}e_i^{2}.
\]

For all \(c_j\in[\underline c,\bar c]\) we have \(e^{0}(c_j)\le \min\{p/\underline c,\,1\}\) because \(p/c_j\le p/\underline c\) and the minimum with \(1\) is non‑decreasing in the argument. Hence, for every realization of \(\mathbf c_{-i}\),
\[
\sum_{j\neq i}e^{0}(c_j) \le (n-1)\min\!\Big\{\frac{p}{\underline c},\,1\Big\}.
\]
If the condition
\[
1+(n-1)\min\!\Big\{\frac{p}{\underline c},\,1\Big\} < X
\]
holds, then even the maximal possible deviation \(e_i=1\) together with the largest possible aggregate contribution from the other users yields
\[
e_i+\sum_{j\neq i}e^{0}(c_j) \le 1 + (n-1)\min\!\Big\{\frac{p}{\underline c},\,1\Big\} < X.
\]
Therefore, for every \(e_i\in[0,1]\) and every \(\mathbf c_{-i}\) the indicator \(\mathbf 1_{\{e_i+\sum_{j\neq i}e^{0}(c_j)\ge X\}}\) is identically zero. Consequently, user \(i\)’s payoff simplifies to the deterministic function
\[
u_i(e_i) = p e_i - \frac{c_i}{2}e_i^{2},
\]
independent of others’ types and of user \(i\)’s beliefs about them. The function \(e_i\mapsto p e_i-\frac{c_i}{2}e_i^{2}\) is strictly concave (its second derivative is \(-c_i<0\)), so the unique maximizer on \([0,1]\) is found by setting the first derivative to zero:
\[
\frac{\partial}{\partial e_i}\Big(p e_i - \frac{c_i}{2}e_i^{2}\Big) = p - c_i e_i = 0
\quad\Longrightarrow\quad e_i^{*} = \frac{p}{c_i}.
\]
When \(p/c_i\le 1\) this interior solution is feasible and indeed the optimal choice. If \(p/c_i>1\) the derivative is positive on the whole interval \([0,1]\), so the maximum is attained at the corner \(e_i=1\). Hence the best response of user \(i\) for any \(c_i\) is exactly
\[
e_i^{*} = \min\!\Big\{\frac{p}{c_i},\,1\Big\} = e^{0}(c_i).
\]

Thus, playing \(e_i = e^{0}(c_i)\) is a best response for every type \(c_i\) and every user \(i\). The strategy profile \(\{e^{0}(c_i)\}_{i=1}^{n}\) is therefore a Bayesian Nash equilibrium.
\end{proof}

\subsection{Proof of Proposition~\ref{prop:Cutoff} (Cutoff Equilibrium)}
\label{app:cutoff-proof}

\noindent\textbf{Calibration of $\tilde{g}(a)$.}
Fix a candidate cutoff $a$. Each of the other $n-1$ users independently participates (contributes $\tilde{g}(a)$) with probability $q = F(a)$, or contributes floor $p/c$ with probability $1-q$. Setting the expected aggregate equal to $X$ when exactly $m+1$ users (including $i$) participate:
\[
  (m+1)\tilde{g}(a) + (n-1-m)\mu_0(a) = X
  \;\Longrightarrow\;
  \tilde{g}(a) = \frac{X - (n-1-m)\mu_0(a)}{m+1}.
\]

\medskip
\noindent\textbf{Pivotal reduction under binomial concentration.}
Replace the random floor aggregate $A_k = \sum_{j=1}^k p/c_j'$ by its expectation $k\mu_0(a)$. This is asymptotically exact by the law of large numbers and standard in the threshold public-goods literature. Conditional on $h$ other participants, provision then requires $e_i \geq t_h(a)$ where $t_h(a) = (m+1-h)\tilde{g}(a) + (h-m)\mu_0(a)$. In particular, $t_m = \tilde{g}$: provision requires user $i$ to contribute the full participation share when exactly $m$ others participate. The pivotal probability is
\[
  \Delta B^{\mathrm{bin}}(a) = \tbinom{n-1}{m}\, q^m(1-q)^{n-1-m}.
\]

Under this approximation, the provision probability $B(e_i, a_p)$ is a step function in $e_i$ with a jump at $e_i = \tilde{g}(a_p)$. Below $\tilde{g}$, the provision probability is $\Pr(h \geq m+1)$; at or above $\tilde{g}$, it is $\Pr(h \geq m) = \Pr(h \geq m+1) + \Delta B^{\mathrm{bin}}(a_p)$. On each constant region, the payoff $pe_i - c_i e_i^2/2$ is maximized at the floor $p/c_i$. Hence the only candidates for a best response are the floor $p/c_i$ and $\tilde{g}(a_p)$.

\medskip
\noindent\textbf{Indifference condition and cutoff structure.}
The ex-ante strategy prescribes: contribute $\tilde{g}(a_p)$ if $c \leq a_p$, floor $p/c$ if $c > a_p$. For this to be an ex-ante BNE, each realized type must prefer its prescribed action. The net benefit of participating at $\tilde{g}(a_p)$ rather than flooring, for a user who realizes type $c$, is
\[
  varphi(c) = V \cdot \Delta B^{\mathrm{bin}}(a_p) - \Gamma^*(c, p, \tilde{g}(a_p)).
\]
The first term is the expected value of being pivotal. It depends on the equilibrium cutoff $a_p$ (through others' strategies) but \emph{not} on the individual type $c$. The second term is the net participation cost $\Gamma^*(c, p, \tilde{g}) = (c\tilde{g} - p)^2/(2c)$, which is strictly increasing in $c$:
\[
  \frac{\partial \Gamma^*}{\partial c} = \frac{c^2 \tilde{g}^2 - p^2}{2c^2} > 0 \quad \text{whenever } c\tilde{g} > p.
\]
Since $c \geq \underline{c}$ and $\tilde{g}(a_p) > p/\underline{c} \geq p/c$, this condition holds throughout the support. Therefore $\varphi(c)$ is strictly decreasing: a constant minus a strictly increasing function. At $c = a_p$: $\varphi(a_p) = \Phi(a_p) = 0$ by construction. Hence:
\begin{itemize}
  \item $c < a_p$: $\varphi(c) > 0$, participating is strictly optimal.
  \item $c > a_p$: $\varphi(c) < 0$, flooring is strictly optimal.
  \item $c = a_p$: indifferent; convention assigns $\tilde{g}(a_p)$.
\end{itemize}
Every realized type plays a best response to the common cutoff strategy, confirming that proposition\ref{prop:Cutoff} is an ex-ante symmetric BNE.

\medskip
\noindent\textbf{Existence of $a_p$.}
Define 
\[
\Phi(a) = V \cdot \Delta B^{\mathrm{bin}}(a) - \Gamma^*(a, p, \tilde{g}(a))
\]
The binomial term is maximized at
\[
a^* = F^{-1}(m/(n-1))
\],
where $\Phi(a^*) > 0$ for $V > \underline{V}$. At $a = \bar{c}$: $q = 1$ gives $\Delta B^{\mathrm{bin}} = 0$, so $\Phi(\bar{c}) = -\Gamma^* < 0$. By continuity (from $f \in C^1$, $f > 0$), the intermediate value theorem gives a root $a_p \in (a^*, \bar{c})$.

\medskip
\noindent\textbf{Uniqueness under \(\Delta\)-quasiconcavity.}
Rewrite the cutoff condition \(\Phi(a)=0\) as
\[
V=R(a),
\qquad
R(a)=
\frac{\Gamma^*(a,p,\tilde g(a))}
{\Delta B^{\mathrm{bin}}(a)}.
\]
The numerator \(\Gamma^*(a,p,\tilde g(a))\) is strictly increasing in \(a\), as
shown above. Under the \(\Delta\)-quasiconcavity condition,
\(\Delta B^{\mathrm{bin}}(a)\) is single-peaked, with peak
\[
a^*=F^{-1}\!\left(\frac{m}{n-1}\right).
\]
For \(a>a^*\), the numerator increases while the denominator decreases. Hence
\(R(a)\) is strictly increasing on this provision-relevant branch. For
\(a<a^*\), \(R(a)\) may decrease, so \(R(a)\) is U-shaped. The existence
argument selects a cutoff \(a_p>a^*\). Therefore, on the selected branch, the
equation
\[
V=R(a)
\]
has exactly one solution.

At the root, \(R'(a_p)>0\). Since
\[
\Phi(a)
=
\Delta B^{\mathrm{bin}}(a)\bigl[V-R(a)\bigr],
\]
we obtain
\[
\Phi'(a_p)
=
-R'(a_p)\Delta B^{\mathrm{bin}}(a_p)
<0.
\]
Thus the cutoff is locally stable. Since \(R'(a_p)\neq 0\), the implicit function
theorem implies that \(a_p\) is \(C^1\)-smooth in \((p,V)\).

\section{Derivation of the Optimal Assignment Structure}
\label{app:backstop-pool}

This appendix provides the formal foundation for the floor-plus-backstop assignment described in Section~\ref{sec:withdrawal}.  We take the full-revelation equilibrium and the provider's rationality (Propositions~\ref{prop:revelation} and~\ref{prop:Assignment}) as given.

Fix a realized cost vector, ordered as \(c_{(1)}\le c_{(2)}\le\cdots\le c_{(n)}\).  
Every user \(j\) has a privately optimal floor contribution \(e^0(c_j)=\min\{p/c_j,1\}\), which he would retain regardless of whether the model improves.  The provider's problem is to choose an assignment \(\{g_j\}_{j=1}^{n}\) with \(g_j\in[0,1]\) such that in the subsequent withdrawal game the aggregate effective contribution reaches the threshold \(X\) whenever a positive assignment is made. Conditional on positive provision, the provider does not assign excess contribution beyond the threshold, since additional assigned contribution only increases privacy burden and subsidy expenditure without increasing the provision benefit.

As argued in the main text, users who are not pivotal for reaching the threshold can be assigned exactly their floor amounts, leaving a residual demand 
\[
D_K = X - \sum_{j\notin K} e^0(c_j)
\]
to be covered by a set \(K\) of backstoppers, where \(K\subseteq\{1,\dots,n\}\) and \(|K|=k\).  
The feasibility condition \(D_K\le k\) follows because each backstopper can retain at most one unit of data.

We now show that, for a given residual demand \(D_K\), the provider minimizes the tightest participation constraint among backstoppers by selecting the \(k\) users with the smallest costs \(c_j\).

\begin{lemma}[Swap argument]
\label{lem:swap}
Let \(i,j\) be two users with \(c_i < c_j\). Suppose a candidate assignment requires \(i\) to contribute at or below his floor \((g_i\le e^0(c_i))\) while requiring \(j\) to contribute strictly above his floor \((g_j > e^0(c_j))\).  Then transferring a sufficiently small amount of the above-floor burden from \(j\) to \(i\) does not tighten any backstopper's participation constraint, and strictly relaxes it if \(i\) was initially at the floor.
\end{lemma}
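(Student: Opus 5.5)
The plan is a direct perturbation argument using the explicit form of the incremental participation cost. For a user with cost \(c\), floor \(e^0(c)=\min\{p/c,1\}\), and required retention \(d\ge e^0(c)\), this cost is \(\Gamma^*(c,p,d)=\tfrac{c}{2}\bigl(d-e^0(c)\bigr)^2\) when \(e^0(c)=p/c\); equivalently \((cd-p)^2/(2c)\), the form used in the proof of Proposition~\ref{prop:Cutoff}. I would first record two elementary facts. (i) On \(d\ge e^0(c)\), \(\partial\Gamma^*/\partial d=c\,(d-e^0(c))\ge 0\), with strict inequality when \(d>e^0(c)\). (ii) At the floor, \(\Gamma^*(c,p,e^0(c))=0\), and \(\Gamma^*(c,p,e^0(c)+\varepsilon)=c\varepsilon^2/2\) is of second order in \(\varepsilon\).

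Next, fix the candidate assignment and define the perturbed assignment \(g_j'=g_j-\varepsilon\), \(g_i'=g_i+\varepsilon\), with all other entries unchanged. Take \(0<\varepsilon\le g_j-e^0(c_j)\), so that \(j\) stays weakly above his floor, and \(\varepsilon\le 1-g_i\), so that \(i\) stays feasible. The degenerate case \(g_i=1\), which is possible only if \(e^0(c_i)=1\), leaves no room to transfer and is set aside. Three checks follow.

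\begin{itemize}
\item The aggregate is unchanged, \(G_0'=G_0=X\). By \eqref{eq:gk}, the residual-demand identity \eqref{eq:dk} for the pool is therefore preserved.
\item No third user's assignment changes, so no third user's participation constraint changes.
\item By fact (i), \(\Gamma^*(c_j,p,g_j-\varepsilon)<\Gamma^*(c_j,p,g_j)\). The drop is first order, approximately \(c_j\,(g_j-e^0(c_j))\,\varepsilon\), so \(j\)'s constraint strictly relaxes.
\end{itemize}

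For user \(i\) I would split into two cases. If \(g_i<e^0(c_i)\), choose \(\varepsilon\le e^0(c_i)-g_i\). Then \(i\)'s new target is still at or below his privately optimal amount. Retaining it is individually rational independent of provision, since \(pe-c_ie^2/2\) is increasing below the floor, exactly as in the proof of Proposition~\ref{prop:floor}. So \(i\) acquires no binding constraint. If \(g_i=e^0(c_i)\), then \(i\) becomes a backstopper with cost \(c_i\varepsilon^2/2\) by fact (ii).

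The main obstacle is comparing this new \(O(\varepsilon^2)\) constraint for \(i\) with the tightest existing one. Here I would use that the pre-transfer maximum \(M=\max_{\ell\in K}\Gamma^*(c_\ell,p,g_\ell)\ge\Gamma^*(c_j,p,g_j)>0\), because \(g_j>e^0(c_j)\). Choosing \(\varepsilon\) small enough that
\[
\frac{c_i\varepsilon^2}{2}<\Gamma^*(c_j,p,g_j-\varepsilon)\le M
\]
guarantees that the new maximum over backstoppers does not exceed \(M\). It is strictly below \(M\) whenever \(j\) was the unique binding backstopper. The relaxation is strict for \(j\) in every case; this is the sense in which I would read ``strictly relaxes.'' The hypothesis \(c_i<c_j\) is not needed for smallness. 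It shows the swap remains favourable even for non-infinitesimal transfers, since the marginal burden \(c\,(d-e^0(c))\) is lower for the low-cost user. This is the ingredient the subsequent argument uses to conclude that the backstop pool consists of the \(k\) lowest-cost users.
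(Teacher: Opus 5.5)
Your argument is correct, but it takes a different route from the paper's.

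The paper relies on monotonicity of \(\Gamma^*\) in the cost parameter. It computes \(\partial\Gamma^*/\partial c=d^2/2-p^2/(2c^2)>0\) for \(d>p/c\), and asserts that replacing part of \(j\)'s above-floor burden by an equal amount of \(i\)'s cannot raise the pool maximum. You instead use monotonicity in \(d\), so \(j\)'s constraint relaxes at first order, together with the fact that \(i\)'s new burden \(c_i\varepsilon^2/2\) is second order. This gives a rigorous local statement and shows that \(c_i<c_j\) plays no role for infinitesimal transfers.

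You are also more careful on two points the paper glosses over.
\begin{itemize}
\item You treat the case \(g_i<e^0(c_i)\) separately: there retention is dominant and no constraint arises.
\item You qualify the strictness clause correctly. The paper claims the pool maximum strictly falls whenever \(i\) starts at the floor, but this fails if another backstopper is the unique or tied binding one. Strict relaxation of \(j\)'s own constraint, by contrast, holds in every case, as you say.
\end{itemize}

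What the cost-ordering route buys is the finite-scale comparison that the lemma's later use (selecting the \(k\) lowest-cost users) actually needs. Write \(\delta_j=g_j-e^0(c_j)\) and take \(i\) at his floor \(p/c_i\). Shifting any \(t\le\delta_j\), within \(i\)'s cap, gives \(i\) the cost \(c_i t^2/2<c_j\delta_j^2/2\) while \(j\)'s cost falls. So even the entire excess can move without raising the pair's maximum.

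This level comparison at equal above-floor amounts, via \(\Gamma^*=c\delta^2/2\), is the clean justification for your closing aside. It is better than your marginal-burden phrasing and more to the point than the paper's fixed-\(d\) derivative. Your small-\(\varepsilon\) argument on its own ignores costs, so iterating it would not single out the lowest-cost users.
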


\begin{proof}
For a user with cost \(c\) and required effective contribution \(d>e^0(c)\), the incremental participation cost is 
\[
\Gamma^*(c,p,d)=\frac{(cd-p)^2}{2c}.
\]
For fixed \(d>p/c\) we have 
\[
\frac{\partial\Gamma^*(c,p,d)}{\partial c}= \frac{d^2}{2}-\frac{p^2}{2c^2}>0,
\]
so \(\Gamma^*\) is strictly increasing in \(c\).  Now reduce \(g_j\) by a small \(\varepsilon>0\) and increase \(g_i\) by the same \(\varepsilon\), keeping the aggregate contribution unchanged.  The transfer replaces part of \(j\)'s above-floor burden with an equal amount of \(i\)'s above-floor burden.  Because \(\Gamma^*\) is strictly increasing in cost, the maximum participation cost in the pool does not increase; it strictly decreases if \(i\) was previously at the floor.  Iterating such transfers yields the conclusion.
\end{proof}

Lemma~\ref{lem:swap} implies that, for any fixed residual demand \(D_K\), the provider weakly relaxes the backstoppers' participation constraints by placing above-floor burdens on lower-cost users before higher-cost users. Hence, if a feasible positive assignment exists with a backstop pool of size \(k\), there exists a feasible assignment using the \(k\) lowest-cost users as backstoppers.

Importantly, the assignment \(g_j\) is a target contribution rather than a maximal authorization. Therefore, the provider does not assign \(g_j=1\) to every backstopper. Instead, for \(j\in K\), the provider assigns protocol-specific targets \(g_j=e_j^*\in[e^0(c_j),1]\) satisfying
\[
\sum_{j\in K} e_j^* = D_K.
\]
The exact split of \(D_K\) across backstoppers depends on the withdrawal protocol: Protocol \(\mathcal{S}\) uses the equal-marginal-cost target allocation, while Protocol \(\mathcal{M}\) uses the small-first sequential construction described in Appendix~\ref{app:proof-containment}.

This floor-plus-backstop structure is valid for both withdrawal protocols \(\mathcal{S}\) and \(\mathcal{M}\). The provider designs the assignment at Stage~1, before the withdrawal stage begins. The optimality argument above relies only on the fact that every non-backstopper has a dominant strategy to retain exactly \(e^0(c_j)\) regardless of the protocol, a property that holds under both simultaneous and small-first withdrawal. Hence the floor-plus-backstop form is protocol-independent, while the allocation of the residual demand among backstoppers is protocol-specific.

\section{Protocol Comparison and Containment}
\label{app:containment}

This appendix provides the formal derivations omitted from Section~\ref{sec:comparison}. After justifying the small‑first ordering of Protocol~$\mathcal M$ against the alternative large‑first protocol, we establish the optimality of the equal‑marginal‑cost allocation underlying Protocol~$\mathcal S$ and give the complete proof of Proposition~\ref{prop:Containment} (containment of provision regions).  We then present a fully detailed illustrative example.  The final part compares the total privacy costs of the two protocols when both succeed.

\subsection{Small-First vs.\ Large-First Ordering}
\label{app:small-large}

The main text defines Protocol~\(\mathcal M\) as the small-first withdrawal protocol, among equally assigned backstoppers, higher-cost users move earlier and the lowest-cost user moves last. We compare it with the reverse ordering.

For a user with cost \(c\), define the largest effective contribution he is willing to
retain on a provision path:
\[
\bar d(c;V,p)
=
\min\left\{
1,\frac{p+\sqrt{2cV}}{c}
\right\}.
\]
This follows from the participation constraint
\[
V\ge \Gamma^*(c,p,d)
=
\frac{(cd-p)^2}{2c}.
\]
Thus \(\bar d(c;V,p)\) is the largest residual gap that type \(c\) can fill while still
preferring provision to withdrawal to the floor. Since
\[
\frac{p+\sqrt{2cV}}{c}
=
\frac{p}{c}+\sqrt{\frac{2V}{c}},
\]
\(\bar d(c;V,p)\) is weakly decreasing in \(c\). Lower-cost backstoppers can therefore
fill weakly larger residual gaps.

\paragraph{Large-first protocol.}
Define the alternative large-first protocol \(\mathcal L\) as the reverse of \(\mathcal M\). Backstoppers move in decreasing order of assigned amounts, with ties among equally assigned backstoppers broken by increasing cost. Since all backstoppers receive the same notional assignment \(g_{(j)}=1\), the order under \(\mathcal L\) is $(1)\rightarrow(2)\rightarrow\cdots\rightarrow(k)$, so the lowest-cost backstopper moves first and the highest-cost backstopper moves last. Let
\[
E_F=\sum_{j\notin K}e^0(c_j)
\]
be the aggregate floor contribution from non-backstoppers. For a running aggregate \(E\), define the large-first continuation-success indicator by
\[
S_{k+1}^{\mathcal L}(E)=\mathbf 1\{E\ge X\},
\]
and, for \(j=1,\ldots,k\),
\[
S_j^{\mathcal L}(E)
=
\mathbf 1
\left\{
\exists d_j\in
\left[e^0(c_{(j)}),\bar d(c_{(j)};V,p)\right]
\ \text{s.t.}\
S_{j+1}^{\mathcal L}(E+d_j)=1
\right\}.
\]
Therefore \(S_j^{\mathcal L}(E)=1\) means that, starting from current aggregate \(E\)
and from backstopper \((j)\), the remaining large-first chain can still reach the
threshold while satisfying all participation constraints. The large-first provision region
is
\[
\Omega_P^{\mathcal L}(p)
=
\left\{
\mathbf c:
S_1^{\mathcal L}(E_F)=1
\right\}.
\]

\begin{prop}[Small-first weakly dominates large-first]
\label{prop:small-vs-large}
For any subsidy \(p\),
\[
\Omega_P^{\mathcal L}(p)\subseteq \Omega_P^{\mathcal M}(p).
\]
The inclusion is strict for an open set of heterogeneous multi-backstopper cost
realizations.
\end{prop}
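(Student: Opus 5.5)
The plan is to reduce both sequential protocols to a \emph{reachability} statement about running aggregates, and then compare the resulting success sets. For each backstopper \((j)\) let \(I_j=[e^0(c_{(j)}),\bar d(c_{(j)};V,p)]\) be the interval of retentions compatible with his participation constraint \(V\ge\Gamma^*(c_{(j)},p,d)\). By the monotonicity of \(\bar d\) shown above, \(\bar d(c_{(1)})\ge\cdots\ge\bar d(c_{(k)})\). Non-backstoppers retain \(e^0\) under either ordering, by the dominant-strategy argument in Appendix~\ref{app:backstop-pool}. The analysis therefore starts from the common aggregate \(E_F\), and only the order in which the intervals \(I_j\) are traversed differs.

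\textbf{Step 1: characterize the large-first success set by backward induction.} I will show by induction from \(j=k\) down to \(1\) that
\[
S_j^{\mathcal L}(E)=1 \iff E\ge X-\sum_{\ell\ge j}\bar d(c_{(\ell)};V,p).
\]
The base case is \(S^{\mathcal L}_{k+1}(E)=\mathbf 1\{E\ge X\}\). For the inductive step, the existential quantifier over \(d_j\in I_j\) is attained at \(d_j=\bar d(c_{(j)})\), because the continuation set is upward closed. Hence
\[
\Omega_P^{\mathcal L}(p)=\Bigl\{\mathbf c:\ E_F+\sum_{j\in K}\bar d(c_{(j)})\ge X\Bigr\}.
\]

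\textbf{Step 2: characterize the small-first success set.} I will do the same analysis for the small-first chain \(S_j\) in \eqref{eq:S1}, in the order \((k)\to\cdots\to(1)\). The last mover \((1)\) succeeds iff the residual lies in \(I_1\), i.e. \(X-E_{<1}\le\bar d(c_{(1)})\). Each earlier mover picks the smallest retention in \(I_j\) that keeps the continuation feasible. By induction, the continuation set is again an upper set \(\{E\ge X-\sum_{\ell\le j}\bar d(c_{(\ell)})\}\).

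\textbf{Step 3: weak inclusion.} Both characterizations compare \(E_F\) with the same sum \(X-\sum_{j\in K}\bar d(c_{(j)})\). In the \(\mathcal L\)-chain this sum is a cap on the \emph{last} (highest-cost) mover, while in \(\mathcal M\) it is a cap on the lowest-cost mover. The inclusion \(\Omega_P^{\mathcal L}\subseteq\Omega_P^{\mathcal M}\) then follows.

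\textbf{Step 4: strict inclusion (the expected obstacle).} If the two upper-set thresholds coincide exactly, the inclusion is only weak, so strictness must come from a constraint that the pure reachability calculus ignores. I would first look at two candidates.

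The first candidate is the requirement in \(\mathcal M\) that the last mover's residual also satisfy \(X-E_{<1}\le 1\). Together with the minimal-retention rule for earlier movers, this lets early high-cost movers shade down to their floors and leave the large gap to the agent with the largest \(\bar d\). In \(\mathcal L\), the early low-cost movers anticipate the high-cost last mover, whose interval \(I_k\) may be too narrow (e.g.\ \(\bar d(c_{(k)})\) close to \(e^0(c_{(k)})\)). Any overshoot by early movers is then subject to the cap \(1\) and to the fact that a mover cannot be forced above his own best response.

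The second candidate is the upper end of the continuation set. I would check whether the overshoot \(E>X\) is actually excluded because assignments exactly reach \(X\) (eq.~\eqref{eq:gk}). If so, retentions are bounded by assignments, and the reachable set in \(\mathcal L\) becomes a bounded interval whose lower end depends on the order.

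With the binding feature identified, I would construct an explicit two-backstopper realization with \(c_{(1)}\ll c_{(2)}\) and \(D_K\) slightly below \(\bar d(c_{(1)})+e^0(c_{(2)})\). In this realization \(\mathcal M\) succeeds, since \((2)\) floors and \((1)\) fills the gap, while \(\mathcal L\) fails. Continuity of \(\bar d\), \(e^0\) and \(D_K\) in \(\mathbf c\) then extends the strict inequalities to an open neighbourhood. If this fails and the regions genuinely coincide, I would restate the strictness claim as holding only under the exact-assignment constraint.
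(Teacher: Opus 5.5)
\textbf{Steps 1--3 (weak inclusion).} These are correct, and they take a tighter route than the paper. The paper argues the inclusion with a rearrangement heuristic (``replacing the weakest final safety net with the strongest one''). You instead solve both recursions exactly. Since $e^0(c)\le\bar d(c;V,p)$, every interval $I_j$ is nonempty, and the existential in $S_j^{\mathcal L}$ is attained at $d_j=\bar d(c_{(j)})$. The same holds for the small-first chain, reading \eqref{eq:S1} with the convention that a residual below the floor is met by retaining the floor. Both regions are therefore the same order-symmetric set $\{\mathbf c: E_F+\sum_{j\in K}\bar d(c_{(j)};V,p)\ge X\}$. So your computation actually proves $\Omega_P^{\mathcal L}(p)=\Omega_P^{\mathcal M}(p)$, which in particular gives the inclusion. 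The remark in Step 3 that the sum is ``a cap on the last mover'' is unnecessary and obscures this.

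\textbf{Step 4 (strict inclusion).} Here the gap cannot be closed, because your own Steps 1--2 refute strictness as the paper formalizes it. Your realization fails on two counts.
\begin{itemize}
\item It is not multi-backstopper: $D_K<\bar d(c_{(1)})+e^0(c_{(2)})$ says exactly that $D_{\{(1)\}}=D_K-e^0(c_{(2)})<\bar d(c_{(1)})\le 1$, so one backstopper suffices.
\item $\mathcal L$ succeeds on it. User $(1)$ moves first and can retain $D_K-e^0(c_{(2)})\in I_1$, after which $(2)$ keeps his floor, so $S_1^{\mathcal L}(E_F)=1$.
\end{itemize}
Neither of your candidate constraints helps. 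The cap at $1$ is already built into $\bar d$. With exact assignments, the provider simply picks a split with each $g_{(j)}\in I_j$, and every pivotal backstopper retains it in either order.

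\textbf{Where the paper's argument breaks.} The paper's strictness argument fails at the same point, at ``choose the residual demand so that the last mover must fill $R$.'' The residual left to the last mover is endogenous. The existential in $S_1^{\mathcal L}$ lets $(1)$ absorb up to $\bar d(c_{(1)})$, leaving only $D_K-\bar d(c_{(1)})$ for $(2)$. The order changes who bears the slack, not whether provision occurs.

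The paper's own example already shows this. Take $\mathbf c=(10,40,100)$, $V=3.5$, $p=0.05$, $D_K=1.2$, so $\bar d(10)\approx 0.842$ and $\bar d(40)\approx 0.420$. Under $\mathcal L$, user $(1)$ can retain $0.80$ with $\Gamma^*\approx 3.16<V$. User $(2)$ then faces residual $0.40$ with $\Gamma^*\approx 3.18<V$, so provision succeeds.

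\textbf{What to conclude.} Do not fall back on a restated strictness claim. Conclude that the weak inclusion holds, in fact with equality. Strict inclusion would require a different model of $\mathcal L$, one in which the final mover's residual is not chosen by forward-looking earlier movers.
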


\begin{proof}
Fix a cost realization \(\mathbf c\in\Omega_P^{\mathcal L}(p)\). Then there exists a
large-first feasible retention path
\[
(d_1^{\mathcal L},\ldots,d_k^{\mathcal L})
\]
such that
\[
E_F+\sum_{j=1}^k d_j^{\mathcal L}\ge X,
\qquad
d_j^{\mathcal L}\le \bar d(c_{(j)};V,p)
\quad \text{for all }j.
\]
Because \(\bar d(c;V,p)\) is weakly decreasing in \(c\), any effective contribution that
is feasible for a higher-cost backstopper is also feasible for a lower-cost backstopper.
Thus the feasible large-first retention profile can be rearranged so that larger residual
burdens are assigned to weakly lower-cost users.

This rearrangement is exactly what the small-first order permits. Under \(\mathcal M\),
higher-cost backstoppers move earlier, while lower-cost backstoppers move later and can
absorb the residual gap left by earlier movers. In particular, the final mover under
large-first is \(c_{(k)}\), the highest-cost backstopper with the smallest residual-filling
capacity, whereas the final mover under small-first is \(c_{(1)}\), the lowest-cost
backstopper with the largest residual-filling capacity. Replacing the weakest final safety
net with the strongest one cannot destroy a feasible continuation chain. Therefore
\(\mathbf c\in\Omega_P^{\mathcal M}(p)\).

To see that the inclusion is generically strict, consider \(k=2\) with
\(c_{(1)}<c_{(2)}\). Since \(\bar d(c;V,p)\) is decreasing in \(c\), there exist parameter
values and a residual gap \(R\) such that
\[
\bar d(c_{(2)};V,p)<R\le \bar d(c_{(1)};V,p).
\]
Choose the residual demand so that the last mover must fill \(R\). Under small-first, the
last mover is the low-cost backstopper \((1)\), who can fill \(R\). Under large-first, the
last mover is the high-cost backstopper \((2)\), who cannot fill \(R\). Hence the
realization succeeds under \(\mathcal M\) but fails under \(\mathcal L\). Such inequalities
hold on an open set whenever costs are heterogeneous, so the containment is generically
strict.
\end{proof}

\subsection{Derivation of the Simultaneous-Withdrawal Target}
\label{app:equal-exposure}

This subsection derives the target allocation used in Protocol~\(\mathcal S\). We do not
repeat the definition of the simultaneous provision region, which is given in
\eqref{eq:S-best-alloc}--\eqref{eq:S-region}. Instead, we explain why the target has
the inverse-cost form in \eqref{eq:S-best-alloc} and why the participation threshold in
\eqref{eq:SV} is the relevant equilibrium condition.

Under Protocol~\(\mathcal S\), all backstoppers choose their retentions simultaneously.
Thus a backstopper cannot condition his action on the realized withdrawals of other backstoppers. Moreover, although the provider observes the full cost vector, an individual
backstopper observes only his own assignment and does not know the other backstoppers' costs. Hence the provider must communicate a personalized target retention to each backstopper before the withdrawal stage.

Let \(D_K\) be the residual demand that must be covered by the backstop pool \(K\).
A simultaneous target vector must satisfy
\[
\sum_{j\in K} e_j = D_K.
\]
The target in \eqref{eq:S-best-alloc} is obtained by equalizing the exposure term
\[
T_j \equiv c_j e_j
\]
across backstoppers. If \(T_j=T\) for all \(j\in K\), then \(e_j=T/c_j\). Substituting
this into the residual-demand constraint gives
\[
D_K
=
\sum_{j\in K}\frac{T}{c_j}
=
T\sum_{j\in K}\frac{1}{c_j}
=
T\bar C_K.
\]
Therefore
\[
T = \frac{D_K}{\bar C_K}, \qquad 
e_j^* = \frac{T}{c_j} = \frac{D_K}{\bar C_K c_j},
\]
which is exactly the target in \eqref{eq:S-best-alloc}. Thus the simultaneous target assigns larger effective retentions to lower-cost users and smaller retentions to higher-cost users.

This target also gives a simple way to compute the binding participation constraint. If all backstoppers retain their targets, then the aggregate contribution reaches the
threshold. If backstopper \(j\) deviates to his floor contribution, aggregate contribution falls below the threshold, so he is pivotal. Therefore user \(j\)'s comparison is between
retaining \(e_j^*\) and receiving the model-improvement value, or withdrawing to his floor and losing provision.

The payoff from retaining \(e_j^*\) is $V+p e_j^*-\frac{c_j}{2}(e_j^*)^2,$
whereas the floor payoff is ${p^2}/2c_j$. Thus retaining the target is optimal if and only if
\[
V+p e_j^*-\frac{c_j}{2}(e_j^*)^2
\ge
\frac{p^2}{2c_j}.
\]
Rearranging gives
\[
V
\ge
\frac{c_j}{2}(e_j^*)^2-p e_j^*+\frac{p^2}{2c_j}
=
\frac{(c_j e_j^*-p)^2}{2c_j}
=
\Gamma^*(c_j,p,e_j^*).
\]
Hence the simultaneous provision equilibrium exists precisely when every backstopper's participation constraint is satisfied:
\[
V\ge \max_{j\in K}\Gamma^*(c_j,p,e_j^*).
\]

Using \(c_j e_j^*=D_K/\bar C_K\), the binding threshold can be written as
\[
\max_{j\in K}
\frac{(D_K/\bar C_K-p)^2}{2c_j}.
\]
Since the users in \(K\) are ordered by cost and \(c_{(1)}\) is the smallest cost in the
backstop pool, the maximum is attained at \(c_{(1)}\). Therefore,
\[
\bar V^{\mathcal S}(\mathbf c,p)
=
\frac{(D_K/\bar C_K-p)^2}{2c_{(1)}}.
\]
This yields the provision condition stated in \eqref{eq:SV}.

If the interior target in \eqref{eq:S-best-alloc} violates the feasibility bounds \(e_j\in[e^0(c_j),1]\), the same derivation applies after fixing the constrained users at their bounds. Users with targets below their floors are set to \(e^0(c_j)\), users with targets above one are set to \(1\), and the remaining residual demand is reallocated among the unconstrained backstoppers by equalizing \(c_j e_j\) over that remaining set.

\subsection{Proof of Proposition~\ref{prop:Containment}}
\label{app:proof-containment}

We first recall the provision regions characterized in the main text. For the
single-backstopper case, the two protocols coincide, as stated in
\eqref{eq:regin1}. We therefore focus on the multi-backstopper case \(D_K>1\).

For the small-first protocol \(\mathcal M\), the backward-induction construction starts from the terminal success condition in \eqref{eq:S1}. The resulting provision region is
given by
\[
\Omega_P^{\mathcal M}(p)
=
\bigl\{
(c_1,\dots,c_n):
S_k(c_{(k)},\dots,c_{(1)},E_F)=1
\bigr\},
\]
as in \eqref{eq:regin2}, where
\[
E_F=\sum_{j\notin K}e^0(c_j)
\]
is the floor aggregate from non-backstoppers.

For the simultaneous protocol \(\mathcal S\), the provider uses the target allocation in \eqref{eq:S-best-alloc}. The simultaneous provision equilibrium exists exactly when the
participation threshold in \eqref{eq:SV} is satisfied. Equivalently, the provision region is
\[
\Omega_P^{\mathcal S}(p)
=
\bigl\{
(c_1,\dots,c_n):
V\ge \bar V^{\mathcal S}(\mathbf c,p)
\bigr\},
\]
as in \eqref{eq:S-region}. With these definitions, we prove the containment result. The key point is that any target-retention vector that sustains simultaneous provision also defines a feasible continuation path under the small-first sequential protocol.

\begin{proof}
Fix a cost realization
\[
\mathbf c\in \Omega_P^{\mathcal S}(p).
\]
By the definition of the simultaneous provision region, there exists a target-retention vector \(\{e_j^*\}_{j\in K}\) such that
\[
\sum_{j\in K} e_j^* = D_K,
\]
and every backstopper's participation constraint is satisfied:
\[
V \ge \Gamma^*(c_j,p,e_j^*), \quad \forall j \in K.
\]
Together with the non-backstoppers' floor contributions, this vector exactly reaches the threshold:
\[
E_F+\sum_{j\in K}e_j^*
=
X.
\]

We now construct a feasible small-first continuation path. Under \(\mathcal M\), the backstoppers move in the order $(k),(k-1),\ldots,(1)$, so higher-cost backstoppers move earlier and lower-cost backstoppers move later. Consider the candidate path in which each backstopper \((j)\) retains \(e_{(j)}^*\). Verify this path by backward induction. Start with the final mover \((1)\). If all earlier movers \((k),\ldots,(2)\) have retained their candidate amounts, then the running aggregate before \((1)\)'s move is
\[
E_{<1}
=
E_F+\sum_{\ell=2}^{k}e_{(\ell)}^*.
\]
Therefore the residual gap faced by the last mover is
\[
X-E_{<1}
=
X-E_F-\sum_{\ell=2}^{k}e_{(\ell)}^*
=
e_{(1)}^*.
\]
Since \(e_{(1)}^*\le 1\) and $V\ge\Gamma^*(c_{(1)},p,e_{(1)}^*)$, the last mover is willing to fill the residual gap. Hence the terminal success condition in the small-first chain is satisfied.

Now suppose that, for some \(j\ge 2\), the lower-cost continuation $(j-1),(j-2),\ldots,(1)$ can successfully cover the residual demand assigned to it by the candidate vector. We show that backstopper \((j)\) can choose \(e_{(j)}^*\) and leave a feasible continuation. Before \((j)\)'s move, the running aggregate is
\[
E_{<j}
=
E_F+\sum_{\ell=j+1}^{k} e_{(\ell)}^*,
\]
where the sum is over higher-cost backstoppers who have already moved. If \((j)\) retains
\(e_{(j)}^*\), the remaining residual gap for later movers is
\[
X-E_{<j}-e_{(j)}^*
=
X-E_F-\sum_{\ell=j}^{k}e_{(\ell)}^*
=
\sum_{\ell=1}^{j-1}e_{(\ell)}^*.
\]
This is exactly the total amount assigned by the candidate vector to the later, lower-cost backstoppers. By the induction hypothesis, these later movers can cover this residual gap while satisfying their participation constraints. Moreover, backstopper \((j)\)'s own participation constraint is satisfied because
\[
V\ge \Gamma^*(c_{(j)},p,e_{(j)}^*).
\]
Thus \(e_{(j)}^*\) is a feasible action that keeps the continuation chain successful.

By induction, the small-first chain succeeds from the initial state \(E_F\). Therefore the
success indicator in the main text satisfies
\[
S_k(c_{(k)},\ldots,c_{(1)},E_F)=1,
\]
and hence
\[
\mathbf c\in \Omega_P^{\mathcal M}(p).
\]
Since the cost realization \(\mathbf c\in\Omega_P^{\mathcal S}(p)\) was arbitrary, we have
\[
\Omega_P^{\mathcal S}(p)\subseteq \Omega_P^{\mathcal M}(p).
\]

When \(k=1\), there is only one backstopper, so the two protocols induce the same single-agent decision: the backstopper fills the deterministic gap \(D_K\) if and only if
\[
V\ge \Gamma^*(c_{(1)},p,D_K).
\]
Consequently, we show that the two provision regions are equivalent.

When \(k\ge2\), the inclusion is generically strict. The reason is that \(\mathcal M\)
uses the order of moves to shift the final residual gap to the lowest-cost backstopper,
whereas \(\mathcal S\) requires all backstoppers to commit simultaneously to their target
retentions. To see strictness formally, consider a heterogeneous pool with
\(c_{(1)}<c_{(2)}\). Define the maximum residual gap that type \(c\) can fill by
\[
\bar d(c;V,p)
=
\min\left\{1,\frac{p+\sqrt{2cV}}{c}\right\}.
\]
This bound is decreasing in \(c\). Hence there exists an open set of parameters for which
some residual gap \(R\) satisfies
\[
\bar d(c_{(2)};V,p)<R\le \bar d(c_{(1)};V,p).
\]
Under small-first, such a residual gap can be left to the final low-cost backstopper
\((1)\), who can fill it. Under simultaneous withdrawal, the same residual cannot be
shifted ex post to the lowest-cost user; all backstoppers must satisfy their target
constraints at once. For cost realizations in this open set, \(\mathcal M\) supports
provision while \(\mathcal S\) does not. Therefore the inclusion is strict generically when
multiple backstoppers are required.
\end{proof}

\subsection{Example}

We now present an example to illustrate a cost realisation for which $\mathcal M$ succeeds while $\mathcal S$ fails, and here we supply the complete calculations.

Consider $n = 3$, $X = 1.2005$, $p = 0.05$, $V = 3.5$, and cost vector $(c_{(1)}, c_{(2)}, c_{(3)}) = (10, 40, 100)$.  
The floor contributions are
\[
e^0(c_{(1)}) = \frac{0.05}{10} = 0.005,\qquad
e^0(c_{(2)}) = \frac{0.05}{40} = 0.00125,\qquad
e^0(c_{(3)}) = \frac{0.05}{100} = 0.0005.
\]
Because $e^0(c_{(1)}) + e^0(c_{(2)}) + e^0(c_{(3)}) = 0.00675 \ll X$, provision cannot be achieved by floor contributions alone.  The provider forms a backstop pool $K$ of the lowest‑cost users.  We check $k=1$: $D_1 = X - e^0(c_{(2)}) - e^0(c_{(3)}) \approx 1.2005 - 0.00125 - 0.0005 = 1.19875 > 1$, so one backstopper is insufficient.  For $k=2$, $K = \{1,2\}$,
\[
D_K = X - e^0(c_{(3)}) = 1.2005 - 0.0005 = 1.2,
\]
and $k-1 < D_K \le k$ holds.  Hence the provider assigns $g_{(1)} = g_{(2)} = 1$, $g_{(3)} = e^0(c_{(3)})$.

\medskip
\noindent\textbf{Outcome under $\mathcal M$.}
Under $\mathcal M$, backstopper $(2)$ (cost $40$) moves first, then backstopper $(1)$ (cost $10$) moves last.  Backstopper $(1)$ faces a residual gap $D_1$ and will fill it if and only if
\[
\Gamma^*(c_{(1)}, p, D_1) = \frac{(c_{(1)}D_1 - p)^2}{2c_{(1)}} \le V.
\]
Solving this inequality gives backstopper $(1)$'s maximum fillable gap
\[
\bar D_1 = \min\!\Bigl\{1,\; \frac{p + \sqrt{2c_{(1)}V}}{c_{(1)}}\Bigr\}
= \min\!\Bigl\{1,\; \frac{0.05 + \sqrt{20 \times 3.5}}{10}\Bigr\}
= \min\!\Bigl\{1,\, 0.84166\ldots\Bigr\} \approx 0.84166.
\]

Backstopper $(2)$ anticipates this behavior and chooses $e_{(2)} \in [p/c_{(2)}, 1]$ to solve
\[
\min_{e_{(2)}} \; \frac{c_{(2)}}{2}e_{(2)}^2 - p e_{(2)}
\quad\text{subject to}\quad D_K - e_{(2)} \le \bar D_1 .
\]
The objective is strictly decreasing in $e_{(2)}$ over the feasible range (since its derivative $c_{(2)}e_{(2)} - p$ is positive for $e_{(2)} > p/c_{(2)} = 0.00125$), so the constraint binds at the optimum.  Hence
\[
e_{(2)}^* = D_K - \bar D_1 = 1.2 - 0.84166 = 0.35834,
\]
and consequently $e_{(1)}^* = \bar D_1 \approx 0.84166$.  Rounding to two decimal places yields the retention profile $(e_{(1)}^{\mathcal M}, e_{(2)}^{\mathcal M}) = (0.84, 0.36)$ as reported in the main text.

The participation costs under these retentions are
\[
\Gamma^*(10, 0.05, 0.84) = \frac{(10 \times 0.84 - 0.05)^2}{20} = \frac{(8.35)^2}{20} \approx 3.486 < 3.5,
\]
\[
\Gamma^*(40, 0.05, 0.36) = \frac{(40 \times 0.36 - 0.05)^2}{80} = \frac{(14.35)^2}{80} \approx 2.574 < 3.5.
\]
Both constraints are satisfied, so provision succeeds under $\mathcal M$.

\medskip
\noindent\textbf{Outcome under $\mathcal S$.}
Under $\mathcal S$, the provider uses the equal‑marginal‑cost allocation.  The harmonic aggregate is
\[
\bar C_K = \frac{1}{10} + \frac{1}{40} = 0.125,
\]
and the individual targets are
\[
e_{(1)}^{\mathcal S} = \frac{D_K}{\bar C_K \, c_{(1)}} = \frac{1.2}{0.125 \times 10} = 0.96,\qquad
e_{(2)}^{\mathcal S} = \frac{D_K}{\bar C_K \, c_{(2)}} = \frac{1.2}{0.125 \times 40} = 0.24.
\]
Both $e_{(1)}^{\mathcal S}=0.96$ and $e_{(2)}^{\mathcal S}=0.24$ lie in their respective feasible intervals $[e^0(c_j), 1]$ (since $0.96 \in [0.005,1]$ and $0.24 \in [0.00125,1]$), so the equal‑marginal‑cost allocation itself is feasible.  The provision BNE exists only if every backstopper's participation constraint is met.  For backstopper $(1)$,
\[
\Gamma^*(10, 0.05, 0.96) = \frac{(10 \times 0.96 - 0.05)^2}{20} = \frac{(9.55)^2}{20} \approx 4.560 > 3.5.
\]
Because backstopper $(1)$ would not retain $0.96$ at $V = 3.5$, the provision BNE does not exist.  Hence $\mathcal S$ fails.  This example shows that the inclusion $\Omega_P^{\mathcal S}(p) \subset \Omega_P^{\mathcal M}(p)$ can be strict.

\subsection{Cost Efficiency in the Common Provision Region}
\label{app:cost-efficiency}

We now compare the total privacy costs when both withdrawal protocols succeed. By Proposition~\ref{prop:Containment}, this common region is simply \(\Omega_P^{\mathcal S}(p)\). The point of the comparison is to compare the intensive-margin cost of the successful retention profile.

Consider a fixed cost realization in the common provision region. Under both protocols, users outside the backstop pool \(K\) are assigned only their floor amounts and retain
\[
e_j=e^0(c_j),\qquad j\notin K.
\]
Therefore, their privacy costs are identical under \(\mathcal S\) and \(\mathcal M\). The only possible cost difference comes from the retentions of the backstoppers in \(K\). Once the non-backstoppers' floor contributions are fixed, the backstoppers must jointly supply the residual demand
\[
D_K
=
X-\sum_{j\notin K}e^0(c_j).
\]
Thus, any successful backstopper retention vector must provide at least \(D_K\) in total.
For cost-efficiency, however, there is no reason to provide strictly more than \(D_K\):
the model-improvement benefit is already obtained once the threshold \(X\) is reached,
while privacy cost
\[
C_K(e)=\sum_{j\in K}\frac{c_j e_j^2}{2}
\]
strictly increases with each positive retention \(e_j\). Hence any cost-minimizing
successful profile satisfies the exact equality
\[
\sum_{j\in K}e_j=D_K.
\]

Each backstopper \(j\in K\) must also choose a feasible retention level. He cannot retain
less than his floor in a provision path, because the floor \(e^0(c_j)\) is privately optimal
from the subsidy alone; and he cannot retain more than one unit by normalization. Hence
the relevant feasible set for the cost comparison is
\[
\mathcal E_K(D_K)
=
\left\{
e=(e_j)_{j\in K}:
e_j\in[e^0(c_j),1],\ \forall j\in K,
\quad
\sum_{j\in K}e_j=D_K
\right\}.
\]

\begin{prop}[Cost efficiency of \(\mathcal S\) on the common provision region]
\label{prop:cost-efficiency}
For any cost realization
\(\mathbf c\in\Omega_P^{\mathcal S}(p)\), let \(e^{\mathcal S}\) be the simultaneous
target-retention vector and let \(e^{\mathcal M}\) be the small-first equilibrium
retention vector. Then
\[
\sum_{j\in K}\frac{c_j (e_j^{\mathcal S})^2}{2}
\le
\sum_{j\in K}\frac{c_j (e_j^{\mathcal M})^2}{2}.
\]
The inequality is strict whenever the small-first retention vector differs from the
cost-minimizing target and no feasibility bound makes the minimizer non-unique.
\end{prop}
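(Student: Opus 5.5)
The plan is to cast the comparison as a convex program on the feasible set \(\mathcal E_K(D_K)\) defined just above. I would show that \(e^{\mathcal S}\) is its minimizer and that \(e^{\mathcal M}\) is merely some point in it. The objective \(C_K(e)=\sum_{j\in K}c_je_j^2/2\) is strictly convex, since each \(c_j>0\). The constraint set is a polytope: a box intersected with one hyperplane. So the problem has a unique minimizer \(e^\dagger\), and the KKT conditions are necessary and sufficient for it.

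First I would verify that \(e^{\mathcal M}\in\mathcal E_K(D_K)\). Take any \(\mathbf c\in\Omega_P^{\mathcal S}(p)\subseteq\Omega_P^{\mathcal M}(p)\), using Proposition~\ref{prop:Containment}. Under the small-first chain each backstopper \((j)\) chooses \(e_{(j)}\in[e^0(c_{(j)}),1]\). Along the equilibrium path the final mover fills exactly the residual \(X-E_{<1}\), and every earlier mover minimizes its own cost, so it never overshoots. Hence \(\sum_{j\in K}e^{\mathcal M}_j=D_K\). Alternatively, I would invoke the argument preceding the proposition: any profile with a strictly larger sum can be reduced coordinatewise, which strictly lowers \(C_K\) and only strengthens the inequality.

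Next I would identify \(e^{\mathcal S}\) with \(e^\dagger\). The Lagrangian condition with multiplier \(\lambda\) on the sum constraint reads \(c_je_j=\lambda\) for every interior coordinate. For bound coordinates it reads \(c_je_j\le\lambda\) at the upper bound \(1\) and \(c_je_j\ge\lambda\) at the floor \(e^0(c_j)\). In the unconstrained case, \eqref{eq:S-best-alloc} gives \(c_je^{\mathcal S}_j=D_K/\bar C_K\) for all \(j\), so the KKT conditions hold with \(\lambda=D_K/\bar C_K\). When bounds bind, the paper's rule fixes the violators at their bounds and re-equalizes \(c_je_j\) on the remaining set. I must check that this rule yields the KKT water-filling solution.

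That check is the main obstacle. A single clipping pass may not give correct multiplier signs, so I would argue through the standard water-filling characterization. Define \(e_j(\lambda)=\mathrm{clip}(\lambda/c_j,\,e^0(c_j),\,1)\). Each \(e_j(\lambda)\) is continuous and nondecreasing in \(\lambda\), so there is a \(\lambda\) with \(\sum_j e_j(\lambda)=D_K\), and this profile satisfies KKT by construction. The iterated clipping rule converges to this \(\lambda\): fixing violators only moves the equalized level monotonically, and iteration stops once no coordinate violates its bounds. If that convergence is delicate, I would simply interpret \(e^{\mathcal S}\) as the water-filling profile, which is exactly what the clipping rule is meant to compute.

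Given these steps, \(C_K(e^{\mathcal S})=\min_{\mathcal E_K(D_K)}C_K\le C_K(e^{\mathcal M})\). Strict convexity gives strict inequality whenever \(e^{\mathcal M}\neq e^{\mathcal S}\). The non-uniqueness caveat in the statement is harmless, since strict convexity rules it out, so I would just remark on this.
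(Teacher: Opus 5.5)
Your proposal is correct and follows the same route as the paper. Both cast the comparison as minimizing the strictly convex cost $C_K$ over $\mathcal E_K(D_K)$, identify $e^{\mathcal S}$ with the KKT/water-filling minimizer, and use $\Omega_P^{\mathcal S}(p)\subseteq\Omega_P^{\mathcal M}(p)$ together with a no-overshoot or trimming argument to place $e^{\mathcal M}$ in the same feasible set. Your extra checks are correct refinements rather than a different argument: the monotone-in-$\lambda$ construction of the water-filling profile, and the observation that strict convexity already rules out a non-unique minimizer (so the stated caveat is vacuous).
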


\begin{proof}
We first characterize the cost-minimizing successful retention vector. Consider the
convex program
\[
\min_{e\in\mathcal E_K(D_K)}
\sum_{j\in K}\frac{c_j e_j^2}{2}.
\]
The objective is strictly convex, and the feasible set is convex. Therefore the solution is
unique whenever the feasible set has a nonempty relative interior.

In the interior case where no bound \(e_j\in[e^0(c_j),1]\) is binding, the Lagrangian is
\[
\mathcal L(e,\lambda)
=
\sum_{j\in K}\frac{c_j e_j^2}{2}
+
\lambda\left(D_K-\sum_{j\in K}e_j\right).
\]
The first-order condition for each \(j\in K\) is
\[
c_j e_j - \lambda = 0 \;\Rightarrow\; c_j e_j = \lambda \quad (\forall j\in K).
\]
the cost-minimizing allocation equalizes marginal privacy costs. Since
\[
\sum_{j\in K}e_j
=
\sum_{j\in K}\frac{\lambda}{c_j}
=
\lambda \sum_{j\in K}\frac{1}{c_j}
=
D_K,
\]
we obtain
\[
\lambda = \frac{D_K}{\bar C_K},\quad 
\bar C_K = \sum_{\ell\in K}\frac{1}{c_\ell}
\;\Rightarrow\;
e_j^{\mathcal S} = \frac{D_K}{\bar C_K c_j}.
\]

This is exactly the simultaneous target-retention vector used in the main. When some feasibility bounds bind, the same cost-minimization logic gives a constrained version of the target. To see this, write the lower and upper bounds as
\[
\ell_j=e^0(c_j),\qquad u_j=1.
\]
The constrained problem is
\[
\min_{\{e_j\}_{j\in K}}
\sum_{j\in K}\frac{c_j e_j^2}{2}
\quad
\text{s.t.}
\quad
\sum_{j\in K}e_j=D_K,
\qquad
\ell_j\le e_j\le u_j.
\]
The KKT conditions imply that, for every unconstrained backstopper $c_j e_j=\lambda$. Thus the unconstrained users still equalize marginal privacy costs. Users whose unconstrained target \(\lambda/c_j\) falls below the floor are fixed at the lower bound \(e^0(c_j)\), while users whose unconstrained target exceeds one are fixed at the upper
bound \(1\). Equivalently, the constrained optimum has the water-filling form
\[
e_j^{\mathcal S}
=
\min\left\{
1,\,
\max\left\{
e^0(c_j),\frac{\lambda}{c_j}
\right\}
\right\},
\]
where \(\lambda\) is chosen so that
\[
\sum_{j\in K}e_j^{\mathcal S}=D_K.
\]
Therefore, whether the interior formula applies or some bounds bind,
\(e^{\mathcal S}\) is the cost-minimizing successful retention vector in
\(\mathcal E_K(D_K)\).

Now consider Protocol~\(\mathcal M\) on the same cost realization. Since
\[
\mathbf c\in\Omega_P^{\mathcal S}(p)\subseteq\Omega_P^{\mathcal M}(p),
\]
the small-first protocol also succeeds. Its equilibrium retention vector \(e^{\mathcal M}\) must cover the same residual demand \(D_K\). If it over-contributes strictly above \(D_K\), reducing some positive retention until the aggregate equals \(D_K\) would keep provision successful and lower total privacy cost. Therefore, for the purpose of cost comparison, the relevant successful small-first retention vector belongs
to the same feasible set:
\[
e^{\mathcal M}\in\mathcal E_K(D_K).
\]
Because \(e^{\mathcal S}\) minimizes \(C_K(e)\) over \(\mathcal E_K(D_K)\), we have
\[
C_K(e^{\mathcal S})
\le
C_K(e^{\mathcal M}),
\]
which proves the desired inequality.

The inequality is strict whenever \(e^{\mathcal M}\neq e^{\mathcal S}\) and the cost-minimizing vector is unique. \(\mathcal S\) directly implements the least-cost division of the residual demand among backstoppers. Protocol~\(\mathcal M\), in contrast, uses the withdrawal order to expand the set of cost realizations where provision can be sustained. This gives \(\mathcal M\) a larger provision region, while \(\mathcal S\) is more cost-efficient conditional on both protocols succeeding.
\end{proof}

\section{ Supplementary Material for experiments}
\subsection{Details of the Numerical Implementation}
\label{app:num-details}

We set $n=50$, $X=10.5$, and draw privacy costs independently from $U[1,5]$. 
The parameters $V$ and $p$ vary on an $80\times80$ grid over $[0,5]\times[0,0.65]$. 
For each $(V,p)$ pair, we draw $N_{\mathrm{mc}}=5{,}000$ independent cost vectors and compute the equilibrium outcome of each mechanism. 
Provision probability is the fraction of simulations in which aggregate contributions reach $X$, and expected welfare follows equation~\ref{eq:EW} of the main text.

\medskip
\noindent\textbf{Monte Carlo uncertainty.}
The equilibrium outcome of each mechanism is deterministic conditional on a realised cost vector. 
However, the reported provision probabilities and expected welfare values are Monte Carlo estimates over independently sampled cost vectors. 
For a mechanism $J\in\{\mathcal C,\mathcal S,\mathcal M\}$ and grid point $(V,p)$, let
$Y_{r}^{J}(V,p)\in\{0,1\}$ denote the provision-success indicator in Monte Carlo draw $r$. 
We estimate the provision probability by
\[
\widehat{\rho}^{J}(V,p)
=
\frac{1}{N_{\mathrm{mc}}}
\sum_{r=1}^{N_{\mathrm{mc}}}
Y_{r}^{J}(V,p),
\]
with standard error
\[
\mathrm{SE}\!\left(\widehat{\rho}^{J}(V,p)\right)
=
\sqrt{
\frac{
\widehat{\rho}^{J}(V,p)
\left(1-\widehat{\rho}^{J}(V,p)\right)
}
{N_{\mathrm{mc}}}
}.
\]
The pointwise 95\% confidence interval is computed as
\[
\widehat{\rho}^{J}(V,p)
\pm
1.96\,
\mathrm{SE}\!\left(\widehat{\rho}^{J}(V,p)\right).
\]
Since $N_{\mathrm{mc}}=5{,}000$, the worst-case standard error for a Bernoulli success-probability estimate is
\[
\sqrt{0.25/5000}\approx 0.0071,
\]
corresponding to a worst-case pointwise 95\% confidence half-width of approximately $0.014$.

For expected welfare, let $W_{r}^{J}(V,p)$ denote the realised social welfare of mechanism $J$ in draw $r$, computed according to equation ~\ref{eq:EW} of the main text. 
We estimate expected welfare by
\[
\widehat{SW}^{J}(V,p)
=
\frac{1}{N_{\mathrm{mc}}}
\sum_{r=1}^{N_{\mathrm{mc}}}
W_{r}^{J}(V,p),
\]
and report its Monte Carlo standard error as
\[
\mathrm{SE}\!\left(\widehat{SW}^{J}(V,p)\right)
=
\frac{s_{W}^{J}(V,p)}{\sqrt{N_{\mathrm{mc}}}},
\]
where $s_{W}^{J}(V,p)$ is the sample standard deviation of 
$\{W_{r}^{J}(V,p)\}_{r=1}^{N_{\mathrm{mc}}}$.

\medskip
\noindent\textbf{Mechanism $\mathcal C$: equilibrium selection.}
For a given belief parameter $b_0$, we first search for a supported cutoff equilibrium. 
A grid of candidate participation probabilities $q\in(0,b_0]$ is evaluated: for each $q$, we set $a=F^{-1}(q)$, compute $\tilde g(a)$ from the cutoff-equilibrium expression in the main text, and estimate the net gain 
\[
\Phi(a;V,p)=V\Delta B(a)-\Gamma^*(a,p,\tilde g(a))
\]
using $B_{\mathrm{mc}}$ auxiliary draws of $n-1$ cost types. 
If any candidate satisfies $\Phi\ge0$, the one with the largest $\Phi$ is selected; otherwise only the floor equilibrium remains. 
We then simulate $N_{\mathrm{mc}}$ realised cost vectors, applying the selected cutoff strategy when available and the floor strategy otherwise. 
Algorithm~\ref{alg:C} summarises the procedure.

\begin{algorithm}[ht]
\caption{Mechanism $\mathcal C$ for a given $(V,p,b_0)$}
\label{alg:C}
\begin{algorithmic}[1]
\State Search $q\in(0,b_0]$: select $a^*$ with largest $\Phi(a;V,p)\ge0$, or null if none exists.
\For{$t=1$ \textbf{to} $N_{\mathrm{mc}}$}
    \State Draw $\mathbf{c}\sim F^n$; compute floor contributions $\mathbf{e}^0$.
    \If{$\sum_i e^0_i\ge X$}
        \State success $\gets$ true
        \State Set $\mathbf{e}\gets\mathbf{e}^0$.
    \ElsIf{$a^*$ exists}
        \State Apply cutoff $a^*$ to obtain $\mathbf{e}$.
        \State success $\gets \left(\sum_i e_i\ge X\right)$
    \Else
        \State success $\gets$ false
        \State Set $\mathbf{e}\gets\mathbf{e}^0$.
    \EndIf
    \State Accumulate success and welfare.
\EndFor
\end{algorithmic}
\end{algorithm}

\medskip
\noindent\textbf{Mechanism $\mathcal S$: simultaneous withdrawal.}
For each cost realisation, we determine the minimal backstop pool $K$ via equation~\eqref{eq:dk}. 
The equal-marginal-cost targets $e_j^*$ are computed as in equation~\eqref{eq:S-best-alloc}, and provision succeeds exactly when every backstopper's participation constraint in equation~\eqref{eq:SV} is satisfied, following the construction in the main text. 
The mechanism is deterministic given the cost realisation and requires no equilibrium-selection parameters.

\medskip
\noindent\textbf{Mechanism $\mathcal M$: small-first withdrawal.}
Using the same backstop pool $K$, backstoppers move in the order specified by the small-first protocol. 
We implement the backward-induction procedure: the chain-success indicators $S_j$ are defined recursively as in equations~\eqref{eq:S1} and~\eqref{eq:regin2}, and provision succeeds if and only if the full chain succeeds from the initial floor aggregate $E_F$. 
The mechanism is also deterministic given the cost realisation and requires no equilibrium-selection parameters.

\medskip
\noindent\textbf{Paired mechanism comparisons.}
All mechanisms are evaluated on the same Monte Carlo cost draws at each grid point. 
Therefore, when comparing two mechanisms $J$ and $J'$, we compute paired differences rather than treating the two estimates as independent. 
For provision probability, define
\[
D_{r}^{J,J'}(V,p)
=
Y_{r}^{J}(V,p)-Y_{r}^{J'}(V,p).
\]
The estimated difference in provision probability is
\[
\widehat{\Delta}_{\rho}^{J,J'}(V,p)
=
\frac{1}{N_{\mathrm{mc}}}
\sum_{r=1}^{N_{\mathrm{mc}}}
D_{r}^{J,J'}(V,p),
\]
with standard error
\[
\mathrm{SE}\!\left(\widehat{\Delta}_{\rho}^{J,J'}(V,p)\right)
=
\frac{s_{D}^{J,J'}(V,p)}{\sqrt{N_{\mathrm{mc}}}},
\]
where $s_{D}^{J,J'}(V,p)$ is the sample standard deviation of the paired differences 
$\{D_{r}^{J,J'}(V,p)\}_{r=1}^{N_{\mathrm{mc}}}$. 
The same paired-difference calculation is used for welfare comparisons, replacing 
$Y_{r}^{J}(V,p)$ by $W_{r}^{J}(V,p)$.

\subsection{Robustness Checks}
\label{app:robustness}

We now conduct two sets of robustness exercises. The first replaces the uniform cost distribution by skewed Beta‑mixture distributions; the second introduces noisy cost observation for mechanisms $\mathcal{S}$ and $\mathcal{M}$, relaxing the assumption that the provider perfectly observes privacy costs and users exhibit no potential misreporting behaviors.

\subsubsection{Alternative cost distribution}

In the baseline simulations, costs are drawn i.i.d.\ from $U[1,5]$.  We now replace this by a Beta distribution rescaled to the interval $[1,5]$, which maintains the same support while allowing the density to be either left‑skewed or right‑skewed.
For left‑skewed costs we use $\text{Beta}(2,5)$, and for right‑skewed costs $\text{Beta}(5,2)$.  Both specifications satisfy the log‑concavity requirement assumed in the theoretical analysis, and they preserve the support $[c_{\text{low}},c_{\text{high}}]=[1,5]$, so the only change is the shape of the distribution.

For each skew direction we repeat the main experiment on an $80\times 80$ $(V,p)$ grid with $N_{\text{mc}}=10\,000$ cost vectors per cell and $B_{\text{mc}}=10\,000$ auxiliary draws for the $\mathcal{C}$‑mechanism belief search.  Mechanism $\mathcal{C}$ is evaluated with the intermediate belief $b_0=0.15$, exactly as in the main text. Figures~\ref{fig:robust-left-combined} and~\ref{fig:robust-right-combined} report success probabilities and expected social welfare for the three mechanisms under left‑skewed and right‑skewed costs, respectively.

\begin{figure}[ht]
    \centering
    \begin{subfigure}[b]{\linewidth}
        \centering
        \includegraphics[width=\linewidth]{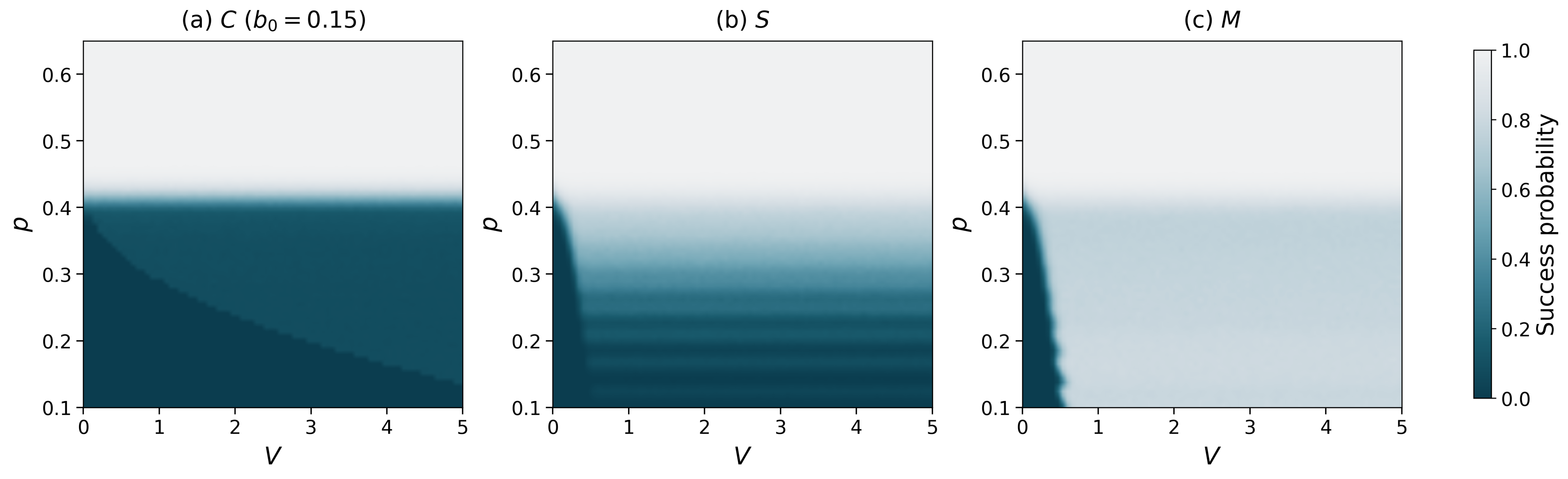}
        \label{fig:robust-left-succ}
    \end{subfigure} 
    \vspace{1em} 
    \begin{subfigure}[b]{\linewidth}
        \centering
        \includegraphics[width=\linewidth]{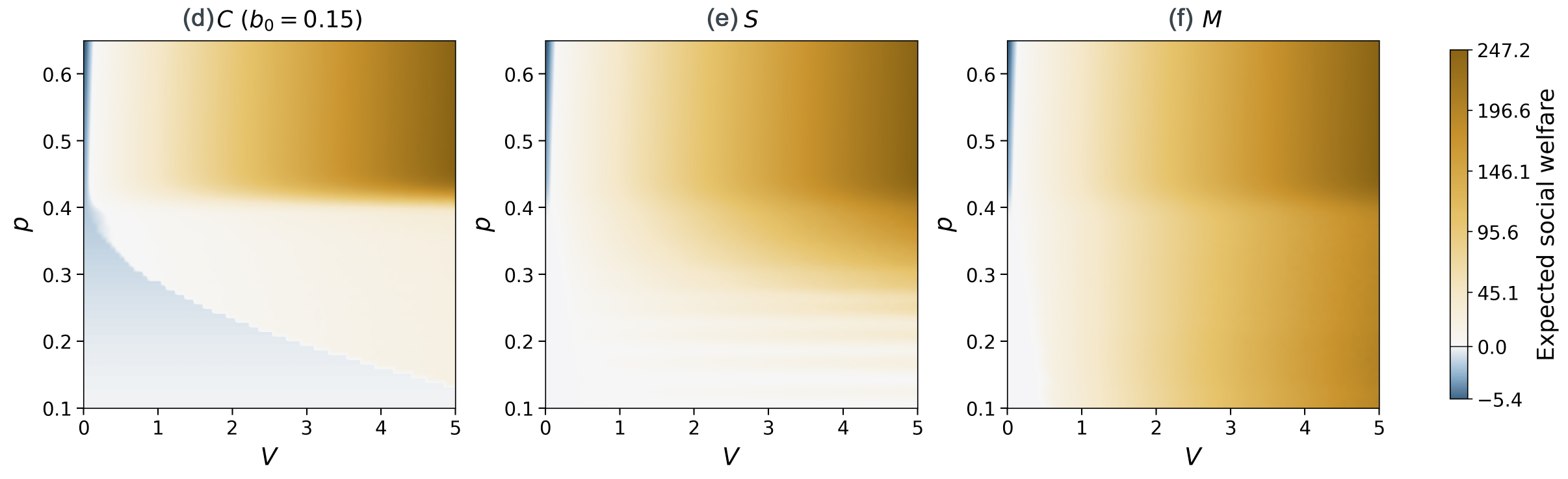}
        \label{fig:robust-left-welf}
    \end{subfigure}
    \caption{Results under left‑skewed costs ($\text{Beta}(2,5)$, mode at low $c$). Top panel: provision success probability for mechanisms $\mathcal{C}$ ($b_0=0.15$), $\mathcal{S}$, and $\mathcal{M}$.  Bottom panel: corresponding expected social welfare.}
    \label{fig:robust-left-combined}  
\end{figure}

\begin{figure}[ht]
    \centering
    \begin{subfigure}[b]{\linewidth}
        \centering
        \includegraphics[width=\linewidth]{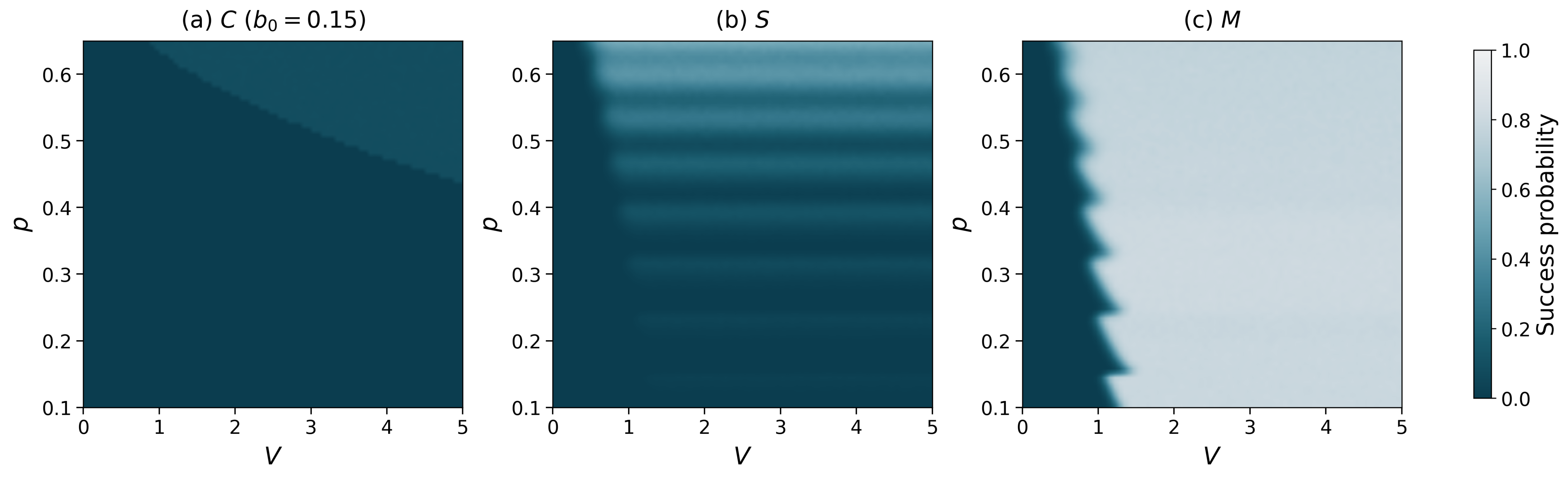}
        \label{fig:robust-right-succ}
    \end{subfigure}
    \vspace{1em}
    \begin{subfigure}[b]{\linewidth}
        \centering
        \includegraphics[width=\linewidth]{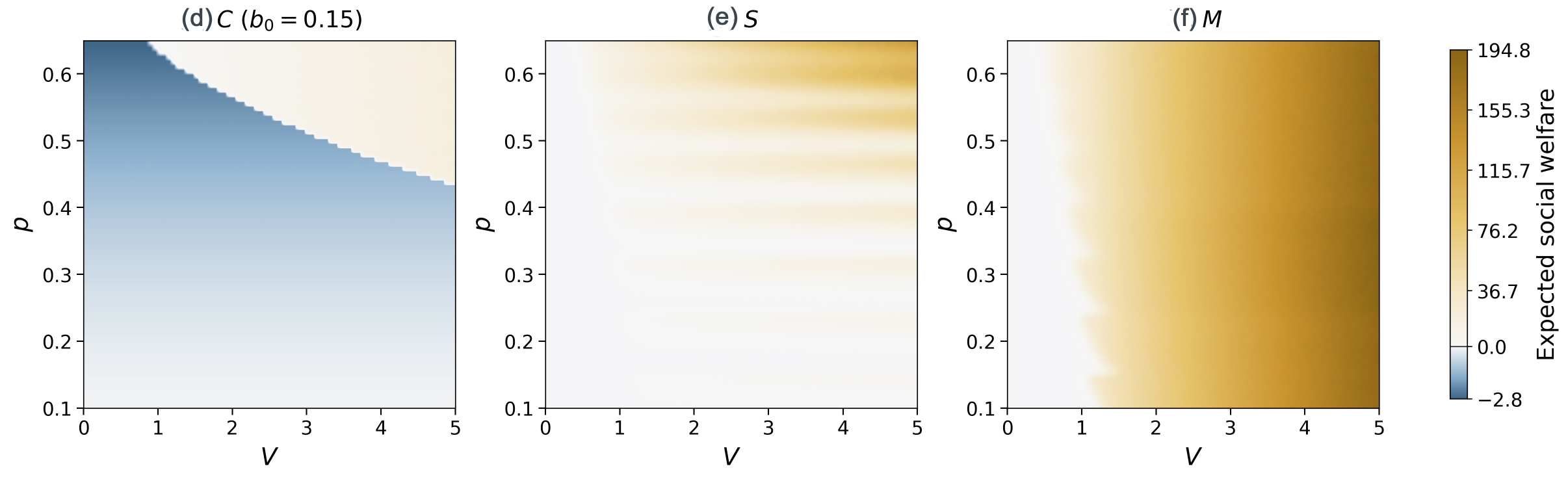}
        \label{fig:robust-right-welf}
    \end{subfigure}
    \caption{Results under left‑skewed costs ($\text{Beta}(5,2)$, mode at high $c$). Top panel: provision success probability for mechanisms $\mathcal{C}$ ($b_0=0.15$), $\mathcal{S}$, and $\mathcal{M}$.  Bottom panel: corresponding expected social welfare.}
    \label{fig:robust-right-combined}
\end{figure}

Under left‑skewed costs (i.e., when low‑cost users are more abundant), the probability of successful model improvement increases substantially for all three mechanisms.  At the same time, the region of negative expected welfare expands slightly in the high‑$p$, low‑$V$ corner.  There, the improvement value $V$ is too small to offset privacy costs, and the higher success rate under the left‑skewed distribution actually deepens the welfare loss.  In contrast, right‑skewed costs (fewer low‑cost users) shrink the provision region and, under mechanism $\mathcal{C}$, produce a noticeably larger subsidy‑leakage area.

Despite these distributional shifts, the central findings of the paper remain unchanged.  In particular,
\begin{itemize}
    \item Both withdrawal mechanisms eliminate the negative‑welfare region that plagues mechanism $\mathcal{C}$ under the floor equilibrium;
    \item The welfare ranking $\mathcal{M}>\mathcal{S}>\mathcal{C}$ is preserved across all cost distributions considered.
\end{itemize}
These results confirm that the paper's conclusions do not hinge on the uniform distribution and are robust to substantial changes in the shape of the cost distribution, as long as the density remains log‑concave.

\subsubsection{Noisy observation of privacy costs}

A strong assumption underlying the baseline withdrawal mechanisms is that the provider can base its assignment rule on accurate cost information. This assumption concerns the information available to the provider after users disclose their costs; it is distinct from the strategic issue of whether users have incentives to misreport their costs. In this subsection, we focus on the former issue and examine whether the main comparison between the withdrawal protocols is robust when the provider observes only noisy cost signals.

Specifically, let the provider observe
\[
    \hat c_i = c_i \exp(\eta_i),
    \qquad
    \eta_i \sim \mathcal{N}(0,\tau^2),
\]
where $\eta_i$ is independent across users and independent of $c_i$. The provider determines the backstop pool and assignments based on $\{\hat c_i\}$, treating these signals as the relevant cost estimates, while users make withdrawal decisions according to their true costs $c_i$. Thus, noise affects the provider's allocation rule but does not change users' underlying privacy costs or their withdrawal incentives.

We consider two levels of noise, $\tau=0.2$ and $\tau=0.5$, and compare them with the noiseless benchmark. Mechanism~$\mathcal C$ is unaffected because it does not require the provider to observe users' costs. Therefore, the robustness exercise focuses on the two withdrawal mechanisms, $\mathcal S$ and $\mathcal M$. Figure~\ref{fig:robust-noisy} reports the provision success probabilities over the same $(V,p)$ grid as in the main experiment.

\begin{figure}[ht]
    \centering
    \includegraphics[width=\textwidth]{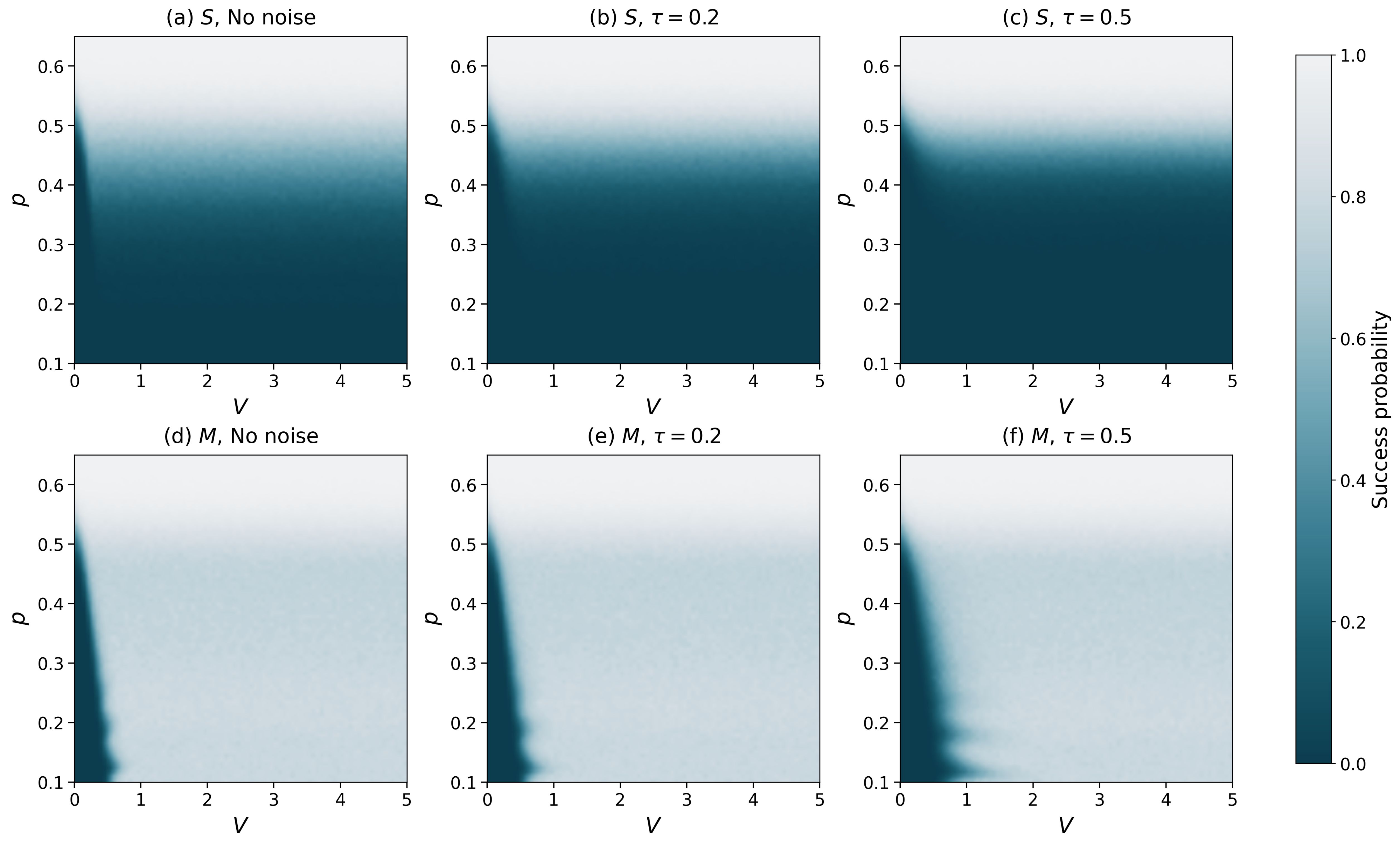}
    \caption{Success probabilities under noisy cost observation. The first row reports mechanism~$\mathcal S$, and the second row reports mechanism~$\mathcal M$. The columns correspond to the noiseless benchmark, moderate noise $(\tau=0.2)$, and substantial noise $(\tau=0.5)$. Noisy cost observation reduces the success region of the simultaneous protocol~$\mathcal S$, whereas the sequential protocol~$\mathcal M$ is mainly affected near the provision boundary and remains largely stable in the interior of its success region.}
    \label{fig:robust-noisy}
\end{figure}

\textbf{Result}. Cost-observation noise weakens the performance of the withdrawal mechanisms, but the magnitude of this effect differs substantially across protocols. For the simultaneous protocol~$\mathcal S$, noise leads to a visible reduction in provision success over a relatively broad region of the $(V,p)$ space. By contrast, mechanism~$\mathcal M$ is considerably more robust. The noisy signals mainly affect observations close to the success boundary, where assignment errors can change whether the backward-induction chain remains feasible. Away from this boundary, the success region of~$\mathcal M$ is almost unchanged. This result demonstrates that even in non-ideal situations with noisy cost observations, the improved success probability of mechanism ~$\mathcal M$ still outperforms that of mechanism ~$\mathcal S$ in most regions of the parameter space.

\subsection{Cost Efficiency and Pointwise Pareto analysis}
\label{app:cost & pareto}
\subsubsection{Masked cost-efficiency comparison between \texorpdfstring{$\mathcal{S}$}{S} and \texorpdfstring{$\mathcal{M}$}{M}.}

The main experiment reports expected welfare, which combines two distinct effects: the small-first protocol $\mathcal{M}$ may succeed for a wider range of cost realisations than the simultaneous protocol $\mathcal{S}$, and, even when both protocols succeed, the two arrangements may generate different privacy costs.  This subsection separates these two channels.  The purpose is not to repeat the overall success-probability comparison, but to isolate the multi-backstopper cases in which the two withdrawal protocols diverge and then compare their conditional privacy costs.

We draw independent cost vectors for each pair $(V,p)$ and compute the equilibrium outcomes under both withdrawal protocols. Let $s^J(c;V,p)$ denote the success indicator of protocol $J\in\{\mathcal{S},\mathcal{M}\}$, and let $k(c,p)$ denote the size of the backstop pool selected by the provider. The event $k(c,p)\geq 2$ identifies cost realizations in which multiple backstoppers are required. According to Section ~\ref{Single-Backstopper Case}, this restriction is important because the two protocols coincide in the single-backstopper case. 

\begin{figure}[t]
    \centering
    \includegraphics[width=\linewidth]{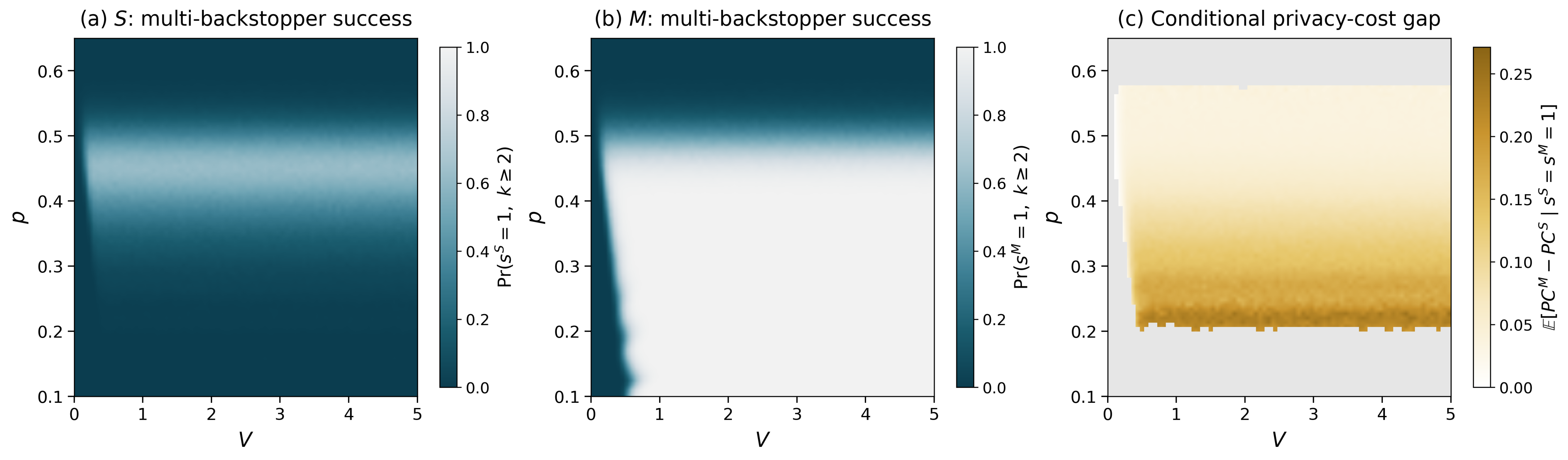}
    \caption{
    Masked cost-efficiency for the $\mathcal{S}$ and $\mathcal{M}$.
    Panel (a) reports the probability that $\mathcal{S}$ succeeds in multi-backstopper instances, $\Pr(s^\mathcal{S}=1,k\geq 2)$.
    Panel (b) reports the corresponding probability for $\mathcal{M}$, $\Pr(s^\mathcal{M}=1,k\geq 2)$.
    Panel (c) reports the conditional privacy-cost gap $PC^\mathcal{M}-PC^\mathcal{S}$ on the common-success set.
    Gray regions in Panel (c) indicate grid points with too few common-success samples.
    }
    \label{fig:d3-cost-efficiency}
\end{figure}

\textbf{Result}. Figure~\ref{fig:d3-cost-efficiency} reports the diagnostic results. Panels (a) and (b) show that the difference between the two protocols is concentrated in the multi-backstopper region. The simultaneous protocol $\mathcal{S}$ succeeds in a relatively narrow part of this region, mainly where the subsidy is high enough to reduce the remaining backstop burden. By contrast, the small-first protocol $\mathcal{M}$ succeeds over a much larger multi-backstopper region.

Panel (c) then masks out all cost realizations in which at least one protocol fails and compares privacy costs only on the common-success set,
\[
\mathcal{C}_{\mathcal{S}\mathcal{M}}(V,p)
=
\{c:s^\mathcal{S}(c;V,p)=1,\ s^\mathcal{M}(c;V,p)=1\}.
\]
It reports the average total privacy-cost gap $PC^\mathcal{M}-PC^\mathcal{S}$ over this set.

The conditional cost gap is nonnegative throughout the valid comparison region and the positive gap is concentrated in the parameter region where multi-backstopper cases are relevant. This pattern is consistent with the theoretical cost-efficiency result: when both protocols succeed, users outside the backstop pool retain the same floor contributions, so any cost difference comes from how the residual burden is allocated among backstoppers. The simultaneous protocol $\mathcal{S}$ uses the equal-marginal-cost allocation and therefore minimizes total privacy cost among successful retention profiles. The small-first protocol $\mathcal{M}$ instead uses sequencing to enlarge the set of feasible provision outcomes, but this comes at the cost of a less cost-minimizing allocation of the residual burden.

The numerical results confirm the efficiency--robustness trade-off. Protocol $\mathcal{M}$ is more robust because it succeeds in more heterogeneous multi-backstopper instances. Conditional on both protocols succeeding, however, protocol $\mathcal{S}$ remains more cost-efficient. Equivalently, $\mathcal{M}$'s welfare advantage in the main experiment is driven by an extensive-margin effect---a larger provision region---rather than by lower privacy cost conditional on success.

\subsubsection{Pointwise Pareto diagnostics.}

The welfare comparison in the main text is based on expected social welfare. This aggregate criterion is useful for comparing mechanisms, but it does not imply that every user is weakly better off in every realized cost vector. To distinguish aggregate welfare improvement from individual-level improvement, we define a pointwise Pareto improvement of mechanism $\mathcal{M}$ over mechanism $\mathcal{C}$ at a realized cost vector $\mathbf{c}$ as
\[
    u_i^{\mathcal{M}}(\mathbf{c};V,p)
    \ge
    u_i^{\mathcal{C}}(\mathbf{c};V,p),
    \qquad \forall i\in N,
\]
with strict inequality for at least one user when a strict Pareto improvement is considered.

We run a pointwise Pareto diagnostic to examine whether the welfare gain of the small-first protocol $\mathcal{M}$ can be interpreted as a user-side Pareto improvement over the subsidy-only mechanism $\mathcal{C}$. At each parameter value $(V,p)$, we draw independent cost vectors and compute each user's realized payoff under $\mathcal{C}$ and $\mathcal{M}$ using the same realization. We report three statistics:
\[
    \Pr_{\mathbf{c}}\!\left[
    u_i^{\mathcal{M}}(\mathbf{c};V,p)
    \ge
    u_i^{\mathcal{C}}(\mathbf{c};V,p),\ \forall i
    \right],
\]
the average share of users who are worse off,
\[
    \mathbb{E}_{\mathbf{c}}\left[
    \frac{1}{n}\sum_i
    \mathbf{1}\{u_i^{\mathcal{M}}(\mathbf{c};V,p)
    <
    u_i^{\mathcal{C}}(\mathbf{c};V,p)\}
    \right],
\]
and the average per-user compensation needed to eliminate all individual losses,
\[
    \mathbb{E}_{\mathbf{c}}\left[
    \frac{1}{n}\sum_i
    \max\{u_i^{\mathcal{C}}(\mathbf{c};V,p)
    -
    u_i^{\mathcal{M}}(\mathbf{c};V,p),0\}
    \right].
\]

\begin{figure}[t]
    \centering
    \includegraphics[width=\linewidth]{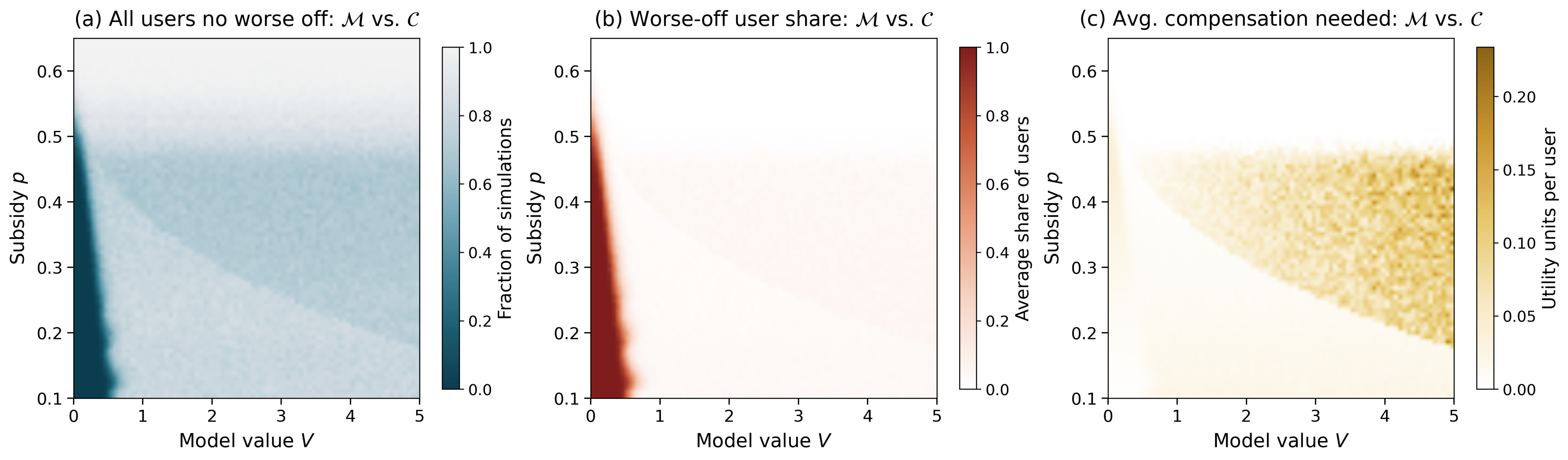}
    \caption{
    Pointwise Pareto diagnostics comparing the small-first withdrawal protocol $\mathcal{M}$ with the subsidy-only mechanism $\mathcal{C}$.
    The three panels report the fraction of simulations in which no user is worse off, the average share of worse-off users, and the average per-user compensation needed to make all users weakly better off.
    }
    \label{fig:d4-pareto}
\end{figure}

\textbf{Result}. Figure~\ref{fig:d4-pareto} shows that $\mathcal{M}$ does not uniformly dominate $\mathcal{C}$ at the individual-payoff level. The main regions where some users are worse off correspond to cases in which $\mathcal{C}$ still gives users favorable payoffs: either users obtain subsidy-induced floor payoffs even though provision fails, or the cutoff equilibrium under $\mathcal{C}$ already supports provision with a relatively favorable burden allocation. In these cases, switching to $\mathcal{M}$ may remove some floor-payoff exposure or reallocate the residual contribution burden toward backstoppers, so a small fraction of users can receive lower realized payoffs.

The heatmaps also show that these losses are limited. The harmed-user region is concentrated mainly in low-$V$ or intermediate-subsidy areas, while most of the parameter space is close to zero. The compensation gap is positive only in a bounded region and remains modest, mostly between $0$ and $0.2$ utility units per user. Thus, $\mathcal{M}$ should be interpreted as a potential Pareto improvement rather than an unconditional pointwise Pareto improvement: it expands provision and raises aggregate surplus, but a small additional transfer may be needed to ensure that no individual user is worse off.

\end{document}